\newtheorem{theorem}{Theorem}
\newtheorem{proposition}[theorem]{Proposition}
\newtheorem{lemma}[theorem]{Lemma}
\newtheorem{corollary}{Corollary}
\numberwithin{equation}{section}
\numberwithin{theorem}{section}
\newcommand{\tr}{{\operatorname{Tr}\,}}
\newcommand{\fs}{-}
\newcommand{\cm}[1]{\textcolor{blue}{#1}}
\title{Dynamical Phase Diagram of the REM\\ under independent spin-flips}
\author[1,2]{Chokri Manai}
\author[1,2,3]{Simone Warzel}
\affil[1]{\small Department of Mathematics, TU Munich, Germany}
\affil[2]{\small Munich Center for Quantum Science and Technology, Munich, Germany}
\affil[3]{\small Department of Physics, TU Munich, Germany}
\date{\today}							
\begin{document}
\maketitle
\begin{abstract}
	We study the energy landscape of the Random Energy model (REM)  integrated along trajectories of the simple random walk on the hypercube.  We show that the quenched cumulant generating function of the time integral of the REM energy undergoes phase transitions in the large $N$ limit for trajectories of any time extent, and identify phases distinguished by the activity and value of the time integral. This is achieved by relating the dynamical behavior to the spectral properties of Hamiltonians associated with the Quantum Random Energy Model (QREM). Of independent interest are deterministic $ \ell^p $-properties of the resolvents of such Hamiltonians, which we establish. 
\end{abstract}


\bigskip

\section{Introduction}
Dynamical expressions of glassy behavior are a central topic in the non-equilibrium theory of spin glass models. Much of the mathematical work concerns one of the versions of Glauber or trap dynamics in the simplest spin glass, namely the random energy model (REM) and its relatives, see e.g.\ \cite{Arous:2003lr,Arous:2003nt,Arous:2008,Cerny:2017ie,Gayrard:2019yr} and references therein. These  dynamics are Markov processes that interact in the sense that the jump rates depend on the configuration and its neighbors' energy value. 
The aforementioned works are devoted to the study of typical trajectories in this dynamics, such as their ageing.

In the present work, by contrast, we study the atypical behavior of the REM's energy for rare trajectories of the infinite-temperature limit of the Glauber dynamics, namely the simple random walk. In the last decades, there have been grown interest in the sampling and study of rare events in physical systems due to their importance, among other aspects, in the behavior of models outside of equilibrium \cite{Bolhuis2002, Derrida2007,Jack2015, Touchette2009}. A closed-form expression for the large-deviation function is derived, which reveals a dynamical phase diagram consisting of three phases distinguished by the activity of the jump process and the atypical, time-integrated value of the REM's energy.  
Such trajectory-phase transitions  are known to reveal dynamical phases which are generally distinct from their static counterparts \cite{Lecomte2007, Merolle2005, Garrahan2018, Jack2020, Garrahan2007, Nyawo2016,  Hedges2009, Speck2012, Vasiloiu2020}.  This is also the case in our study, which is a follow-up of the investigation of the trajectory phase-diagram of the REM under the simpler all-to-all dynamics~\cite{GMW23}.

Although our point of view is different from that taken in the above-mentioned studies of Glauber dynamics, it is natural to conjecture that the atypical behavior of the REM seen under the simple random walk is what is typically observed in a trap dynamics at very high temperature \cite{BBC18,BC22}.

\subsection{Simple random walk on the Hamming cube}
The simple random walk $ \pmb{\omega}: [0,\infty) \to \mathcal{Q}_N $ is the most elementary stochastic dynamics on the configuration space $ \mathcal{Q}_N  := \{ -1,1 \}^N  $ of $ N $ Ising spins. 
As a continuous-time Markov jump process, it is uniquely characterized by the rate
\begin{equation}
w(\pmb{\sigma}\to \pmb{\tau}) = \begin{cases} 1, & d(\pmb{\sigma}, \pmb{\tau}) = 1, \\
					0,  &  \mbox{else}, 
					\end{cases}
\end{equation}
for jumping from a spin configuration $ \pmb{\sigma} = (\sigma_1, \dots , \sigma_N) \in \mathcal{Q}_N $ to any neighboring configuration $ \pmb{\tau} $, that is, a configuration at unit Hamming distance $ d(\pmb{\sigma}, \pmb{\tau})  = \frac12 \sum_{i=1}^{N} |\sigma_i - \tau_i| $. The random waiting times $ t \geq 0 $ for a  jump from $   \pmb{\sigma} $ to any neighbor are distributed according to the exponential law $ r( \pmb{\sigma}) \exp\left(- r(\pmb{\sigma})t\right) $. 
In our units, the escape rates $ r(\pmb{\sigma})  := \sum_{ \pmb{\tau} \neq \pmb{\sigma}}  w(\pmb{\sigma}\to \pmb{\tau})  = N $ of the simple random walk are configuration-independent and equal to the particle number $N$.  This corresponds to a unit flip rate of the individual spins $\sigma_j$. Accordingly, the generator $ W $ of this Markov process is the Laplacian on the Hamming cube: 
\begin{equation}
	( Wf)(\pmb{\sigma})  := \sum_{
		 \pmb{\tau} \in  \mathcal{Q}_N }  
		w(\pmb{\sigma}\to \pmb{\tau}) \,  \left( f(\pmb{\tau}) - f(\pmb{\sigma}) \right) = ( Tf)(\pmb{\sigma})   - N f(\pmb{\sigma})  . 
 \end{equation}
The linear operator $T$ acting on observables $ f : \mathcal{Q}_N \to \mathbb{R} $ stands for the adjacency matrix of the Hamming cube. The adjoint $ W^* $ of the Markov generator, which in the present set-up agrees with $ W $, acts on probability distributions $ p_t (\pmb{\sigma})  $, an the master equation 
\begin{equation}
\partial_t   p_t (\pmb{\sigma}) = \left( W^*p_t \right)(\pmb{\sigma})  
\end{equation}
governs the time evolution of any initial distribution $ p_0(\pmb{\sigma})$.

\subsection{Trajectory observables and their large deviations}
Given an energy function $ U: \mathcal{Q}_N  \to \mathbb{R} $, one may explore this  function along any trajectory $ \pmb{\omega} : [0, \infty ) \to \mathcal{Q}_N $ of the random walk: 
$$
U_t[ \pmb{\omega}] := \int_0^t U\left( \pmb{\omega}(s)\right) \ ds , \quad t > 0 . 
$$
We will be interested in the probability distribution of $ U_t $ under the law $ P_t $ of the simple random walk up to time $ t $ with the initial spin configurations equally distributed, that is, under the unique stationary distribution $ p_\infty(\pmb{\sigma}) = 2^{-N} $ of the simple random walk. The main result of this note is a proof of a large deviation principle for this distribution in the limit of large system size $ N $ for trajectories of any time extent $t$. 
The large deviation principle  is  described in terms of the moment generating function 
\begin{equation}\label{eq:mom1}
	\int e^{-\lambda U_t[ \pmb{\omega}] } \ P_t(d\pmb{\omega}) = \sum_{\pmb{\sigma} \in \mathcal{Q}_N} \left( e^{t \left( W^* - \lambda U \right)} p_\infty\right)(\pmb{\sigma}) = 
	\frac{1}{2^N}  \sum_{\pmb{\sigma}, \pmb{\tau}\in \mathcal{Q}_N}  \langle  \pmb{\sigma} | e^{t \left( T - N\mathbbm{1}  - \lambda U  \right)} | \pmb{\tau} \rangle   . 
\end{equation}
The first equality follows from the Feynman-Kac formula (cf.~\cite{KLW21, Leschke:2021xw}), and the second equality is by definition of the equilibrium distribution $p_\infty$. 
The identity \eqref{eq:mom1} involves  the exponential of the tilted generator 
$ W^* - \lambda U $ with $ \lambda \in \mathbb{R} $, which may also be viewed as a self-adjoint matrix 
acting on the tensor product Hilbert space $ \mathcal{H} := \otimes_{j=1}^N \mathbb{C} ^2$. In the last step of~\eqref{eq:mom1} and subsequently, we use Dirac's notation for the canonical orthonormal tensor product basis,
$ | \pmb{\sigma}\rangle  = \otimes_{j=1}^N  |\ \sigma_j\rangle$.  This orthonormal basis is the eigenbasis of $ U $, that is, $ U |\pmb{\sigma}\rangle = U(\pmb{\sigma})  |\pmb{\sigma}\rangle $.  Introducing the normalized flat vector
\begin{equation}\label{eq:flat}
 | \fs \rangle :=  \frac{1}{\sqrt{2^N}} \sum_{\pmb{\sigma}\in \mathcal{Q}_N} | \pmb{\sigma} \rangle \in  \mathcal{H} ,
\end{equation}
one arrives at the expression
$ \langle \fs |  e^{t \left( W - \lambda U \right)  } | \fs \rangle $ for the right side of~\eqref{eq:mom1}.  
This connects the question concerning the atypical behavior of $ U_t$  under the law $ P_t $ to a spectral problem, namely the properties of the semigroup on $ \mathcal{H} $ generated by $ T - N\mathbbm{1}  - \lambda U $ in the flat vector.   In this context, it is useful to note that the adjacency operator $ T= \sum_{j=1}^N X_j $ agrees with the sum of Pauli-$ X $ matrices, 
 which flip the $ j $th spin, that is, $  X_j |  \pmb{\sigma}  \rangle = | \sigma_1, \dots , - \sigma_j , \dots , \sigma_N \rangle $.   

To measure the activity of the jump process, it is convenient to introduce yet another tilting in the generator, 
\begin{equation}
W_{\lambda, s} := e^{-s} \ T - N \mathbbm{1} + \lambda U , \quad s, \lambda \in \mathbb{R} ,
\end{equation}
which modifies the individual spin-flip rate to $ e^{-s} $. but keeps the jump rate constant at $ N $. 
The associated moment-generating function is
\begin{equation}
	Z(t,\lambda,s)  := \langle \fs |  e^{t W_{\lambda,s}  } | \fs \rangle .
\end{equation}
What plays the role of a free energy for the trajectories is the scaled cumulant generating function (SCGF) given by
$$
\theta_N(t,\lambda,s) := \frac{1}{N t} \ln Z(t,\lambda,s) . 
$$
In case the limit $\theta(t,\lambda,s) :=  \lim_{N\to \infty} \theta_N(t,\lambda,s) $ exists, 
the G\"artner-Ellis theorem~\cite[Thm. 2.3.6]{DemZeit98} implies that  the Legendre-Fenchel transformation 
\begin{equation}
\varphi(t,u,s)  := \sup_{\lambda } \left( u \lambda -  \theta(t,\lambda,s) \right)  
\end{equation}
governs the  large deviations of $ U_t $, that is, for any Borel set $ I \subset \mathbb{R} $ and any $ t > 0 $:
\begin{align}\label{eq:LDPU}
 - \inf_{u \in I^\circ } t \varphi(t, u,0) & \leq    \liminf_{N\to \infty} \frac{1}{N} \ln P_t\left((Nt)^{-1} U_t \in I \right) \notag \\ 
&  \leq  \limsup_{N\to \infty} \frac{1}{N} \ln P_t\left( (N t)^{-1} U_t \in I  \right) = - \inf_{u \in \overline{I}} t \varphi(t, u,0) . 
\end{align}

Thanks to convexity, the partial derivative $ - \partial_s  \theta(t,\lambda,0) := \partial_s  \theta(t,\lambda,s) \big|_{s=0} $, whenever it exists,  agrees up to a sign with the average activity per unit space and time. More precisely, the asymptotic average of the number of configuration changes in a trajectory is asymptotically given by
\begin{equation}
- \partial_s  \theta(t,\lambda,0)   = \lim_{N\to \infty}  \frac{1 }{N  } \sum_{\pmb{\sigma}\neq \pmb{\tau} } \int_0^t  P_{t,\lambda}\left( \pmb{\omega}(s+0)=  \pmb{\sigma}\; \mbox{and} \; \pmb{\omega}(s-0)=  \pmb{\tau} \right)   \ \frac{ds}{t} , 
\end{equation}
with $  P_{t,\lambda} $ the tilted probability measure corresponding to $  e^{-\lambda U_t[\pmb{\omega}]} \ P_{t,\lambda}(d\pmb{\omega}) / Z(t,\lambda,0) $. 

\subsection{Trajectory phase transition for the REM} 
Our main result is the existence of the limit $ N \to \infty $ and an explicit expression for the SCGF in case $ U $ is the Random Energy Model (REM) \cite{Derrida:1980mg,Bov06}.
The REM,
$U: \mathcal{Q}_N \to \mathbb{R}$, is a Gaussian random field with randomness independent of the Markov process, in which the values $ U(\pmb{\sigma}) $ are distributed independently for all $ \pmb{\sigma} \in \mathcal{Q}_N $ with identical normal law uniquely characterized by zero mean and variance $ N $, that is, denoting the law by $ \mathbb{P} $ and the corresponding expectation value by $ \mathbb{E} $, we have
$$
\mathbb{E}\left[U(\pmb{\sigma})\right] = 0 , \quad \mathbb{E}\left[U(\pmb{\sigma}) U(\pmb{\tau})\right] = \begin{cases} N , & \pmb{\sigma} = \pmb{\tau} , \\ 0 , & \mbox{else.}\end{cases}
$$ 
The units are chosen so that the REM's large deviations occur on order $ N $ which agrees with the norm of $ T $. Up to a shift, the semigroup generator $ e^{-s} T - N \mathbbm{1} - \lambda U $ coincides with the Hamiltonian of the Quantum Random Energy Model (QREM) -- one of the simplest quantum spin glass models~\cite{Goldschmidt:1990kr,Manai:2020ta,Manai:2021nu,MaWa20}.\\

The proof of the following main result, which can be found in Section~\ref{sec:proof},  builds on the comprehensive spectral analysis of the QREM \cite{Manai:2020ta, Manai:2021nu}, but requires substantial new technical insights. 

\begin{theorem}\label{thm:mainasym}
For any $ t  > 0, \lambda \geq  0 $, $ s \in \mathbb{R} $ the scaled cumulant generating function for the REM converges $ \mathbb{P} $-almost surely with  limit given by
\begin{equation}\label{eq:mainasym}
\lim_{N \to \infty} \theta_N(t,\lambda,s) = \theta(t,\lambda,s) \coloneqq \max\left\{ e^{-s}   , t^{-1} p_{\mathrm{REM}}(t \lambda )  \right\}  -1 ,
\end{equation}
with 
\begin{equation}\label{eq:defthetanull}
	p_{\mathrm{REM}}( \beta ) := \begin{cases} 
		\frac{\beta^2}{2} , & \beta \leq  \beta_c := \sqrt{2 \ln 2}, \\
		\beta \beta_c - \ln 2 , & \beta > \beta_c .
		\end{cases}
\end{equation}
\end{theorem}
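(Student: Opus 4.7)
The plan is to combine a Feynman-Kac reduction with spectral and concentration arguments. As a preliminary step, write $W_{\lambda,s}=e^{-s}(T-N\mathbbm{1})-N(1-e^{-s})\mathbbm{1}+\lambda U$ and recognise the first summand as the generator of a continuous-time random walk $\omega_s$ on $\mathcal{Q}_N$ with independent spin-flip rate $e^{-s}$. The Feynman-Kac formula then yields
\[
Z(t,\lambda,s)\;=\;e^{-tN(1-e^{-s})}\;\mathbb{E}_{\omega_s}\!\bigl[e^{\lambda U_t[\omega_s]}\bigr],
\]
where the walk is started from the uniform distribution, so the theorem reduces to identifying the $N\to\infty$ limit of the quenched SCGF of $U_t[\omega_s]$.

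The lower bounds follow from two trial arguments that produce the two branches of the $\max$. Jensen's inequality combined with the identity $\mathbb{E}_{\omega_s}[U_t[\omega_s]]=t\cdot 2^{-N}\sum_{\pmb\sigma}U(\pmb\sigma)$ and the $\mathbb{P}$-a.s.\ estimate $2^{-N}\sum_{\pmb\sigma}U(\pmb\sigma)=o(N)$ (Gaussian concentration for the sum of $2^N$ iid centered Gaussians) gives $\theta_N(t,\lambda,s)\ge e^{-s}-1+o(1)$. For the REM branch, restrict the walk to trajectories that perform no jump in $[0,t]$, an event of probability $e^{-Nte^{-s}}$ on which $U_t[\omega_s]=tU(\omega_s(0))$; the resulting bound $\mathbb{E}_{\omega_s}[e^{\lambda U_t[\omega_s]}]\ge e^{-Nte^{-s}}\cdot 2^{-N}\sum_{\pmb\sigma}e^{t\lambda U(\pmb\sigma)}$, combined with the standard $\mathbb P$-a.s.\ asymptotic $N^{-1}\ln\bigl(2^{-N}\sum_{\pmb\sigma}e^{\beta U(\pmb\sigma)}\bigr)\to p_{\mathrm{REM}}(\beta)$, yields $\theta_N(t,\lambda,s)\ge t^{-1}p_{\mathrm{REM}}(t\lambda)-1+o(1)$.

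The main obstacle is the matching upper bound. My plan is to exploit the spectral decomposition $Z(t,\lambda,s)=\sum_k e^{tE_k}\,|\langle -|\psi_k\rangle|^2$ of the tilted QREM generator $W_{\lambda,s}$ and to invoke the spectral picture of the QREM developed in \cite{Manai:2020ta,Manai:2021nu}: an isolated paramagnetic eigenvalue near $(e^{-s}-1)N$ whose eigenvector is essentially $|-\rangle$, and a glassy bulk of eigenvalues near $\{\lambda U(\pmb\sigma)-N\}_{\pmb\sigma}$ whose eigenvectors are strongly localised on the computational basis. The paramagnetic block contributes $\sim e^{tN(e^{-s}-1)}$, while the glassy block contributes $\sim 2^{-N}\sum_{\pmb\sigma}e^{t(\lambda U(\pmb\sigma)-N)}\sim e^{tN(t^{-1}p_{\mathrm{REM}}(t\lambda)-1)}$ by the REM large deviations for the eigenvalue counting function. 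Their maximum reproduces the conjectured formula.

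The hard part is making this dichotomy quantitative across the entire spectrum without explicit eigenfunctions, in particular controlling the flat-state overlaps $|\langle -|\psi_k\rangle|^2$ and local eigenvalue counts uniformly---especially in the crossover window where $e^{-s}$ and $t^{-1}p_{\mathrm{REM}}(t\lambda)$ are comparable. Here I would invoke the $\ell^p$-resolvent bounds announced in the abstract: via a Feshbach/Schur-complement decomposition with projector $P=|-\rangle\langle -|$, the matrix element $\langle -|(z-W_{\lambda,s})^{-1}|-\rangle$ is expressed as $(z-(e^{-s}-1)N-\Sigma(z))^{-1}$, where the self-energy $\Sigma(z)=\lambda^2\langle -|UQ(z-QW_{\lambda,s}Q)^{-1}QU|-\rangle$ (with $Q=\mathbbm{1}-P$) is controlled through $\ell^p$ norms of the REM-perturbed resolvent on the orthogonal complement. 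Inserting a contour representation $e^{tW_{\lambda,s}}=(2\pi i)^{-1}\oint e^{tz}(z-W_{\lambda,s})^{-1}\,dz$ and deforming around the rightmost singularities delivers the upper bound on $\theta_N$. Finally, Gaussian concentration (Borell-TIS) applied to $\ln Z(t,\lambda,s)$ viewed as a Lipschitz function of the Gaussian field $U$ upgrades the convergence from expectation to $\mathbb P$-almost sure.
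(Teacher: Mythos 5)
Your lower bound is essentially the paper's: the Jensen bound on $\langle \fs|W_{\lambda,s}|\fs\rangle$ together with the law of large numbers gives the $e^{-s}-1$ branch, and your ``no-jump'' trajectory restriction is the probabilistic reformulation of the paper's second Jensen bound $\langle\fs|e^{tW_{\lambda,s}}|\fs\rangle\ge 2^{-N}\sum_{\pmb\sigma}e^{\langle\pmb\sigma|tW_{\lambda,s}|\pmb\sigma\rangle}$. Both are fine.

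The upper bound, however, has a genuine gap, and it is exactly the one the paper is designed to close. You write that the glassy block ``contributes $\sim 2^{-N}\sum_{\pmb\sigma}e^{t(\lambda U(\pmb\sigma)-N)}$'' because the associated eigenvectors are ``strongly localised on the computational basis.'' That localisation claim is not available from \cite{Manai:2020ta,Manai:2021nu}: those references prove the free energy/phase diagram (trace quantities), not eigenvector localisation, and the paper itself is explicit that it does not prove localisation in this regime (``although we do not quite prove localization, we cover what is believed to be the full localization regime''). What you actually need is the bound $\langle\fs|Q_{\delta\lambda}|\fs\rangle\lesssim e^{-N\delta^2/2}$ on the flat-state spectral density above threshold (Proposition~2.2), and this is the hard, new technical content; invoking localisation as known assumes it.

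Your Feshbach/Schur route with the rank-one projector $P=|\fs\rangle\langle\fs|$ is also the wrong decomposition for this model. For the all-to-all dynamics that choice works because the kinetic part $W$ is itself rank one, so $QWQ$ trivialises. Here $T=\sum_j X_j$ is full rank, and the self-energy $\Sigma(z)=\lambda^2\langle\fs|UQ(z-QW_{\lambda,s}Q)^{-1}QU|\fs\rangle$ involves the resolvent of $QW_{\lambda,s}Q$ on a $(2^N-1)$-dimensional subspace that still carries the entire REM spectrum; for $z$ in the glassy energy window this resolvent is singular, and the contour deformation becomes intractable without essentially re-solving the original problem. The paper instead performs the Schur complement with respect to the \emph{deep-hole} subspace $\ell^2(\mathcal{L}_\eta)$ versus $\ell^2(\mathcal{Q}_N\setminus\mathcal{L}_\eta)$: the restriction $H^>$ to the latter has spectrum strictly below $\lambda\delta N$ (Lemma~3.4), so $R^>(E)$ is uniformly controllable for $E>\lambda\delta N$, while the former has only $|\mathcal{L}_\eta|\approx 2^{N(1-\eta^2/(2\ln 2))}$ dimensions. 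The flat-state overlap of each eigenfunction is then written as $2^{-N/2}\langle\varphi(E)|\psi^<(E)\rangle$ with $\varphi(E)=\sqrt{2^N}(\mathbbm{1}^<+A^*R^>(E)\mathbbm{1}^>)|\fs\rangle$, and the deterministic $\ell^1$/$\ell^\infty$ resolvent bounds of Section~3.1 give $\|\varphi(E)\|_\infty=O(1)$ and $\|\varphi^{(k)}(E)\|_\infty=O(N^{-k})$; a Taylor-expansion-plus-Bessel trick over short energy windows then controls $\sum_E|\langle\varphi(E)|\psi^<(E)\rangle|^2$ by $O(e^{\varepsilon N}|\mathcal{L}_\eta|)$, which is the desired bound. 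Your $\ell^p$-resolvent idea is in the right spirit, but it has to be applied to the restricted resolvent $R^>(E)$, not to the self-energy of a rank-one Feshbach map, and the eigenvector information it yields is the $\ell^\infty$ control of $\varphi(E)$, not localisation. Finally, the paper closes the almost-sure statement by Borel--Cantelli on the exceptional events accumulated along the way, not by a Borell--TIS Lipschitz argument on $\ln Z$ (a Lipschitz constant for $\ln Z$ in the Gaussian field would in any case need justification since $U\mapsto\ln\langle\fs|e^{tW_{\lambda,s}}|\fs\rangle$ has derivative of order $t\lambda$ in each coordinate, giving a Lipschitz constant of order $t\lambda 2^{N/2}$ on $\ell^2$, which is too large to be useful directly).
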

Some remarks are in order. 

By the symmetry of the REM's distribution, we restrict ourselves to the case $ \lambda \geq 0 $ without loss of generality. 

The quantity defined in~\eqref{eq:defthetanull} is the pressure corresponding to the REM's static (normalized) partition function at inverse temperature $ \beta $:
\begin{equation}\label{eq:REMp}
p_{\mathrm{REM}}( \beta )  = \lim_{N\to \infty} \frac{1}{N} \ln \frac{1}{2^N} \sum_{\pmb{\sigma}} e^{-\beta U(\pmb{\sigma}) }  .
\end{equation}
The critical value $ \beta_c = \sqrt{2 \ln 2} $ is the inverse of the REM's freezing temperature into a spin glass phase with 1-step replica symmetry breaking, cf.~\cite{Derrida:1980mg,Bov06}.  

The limit of the SCGF $\theta(t,\lambda,0) $ coincides with the limit in the case of the all-to-all dynamics, for which the generator is 
$ W = N ( | \fs \rangle \langle \fs | - \mathbbm{1} ) $. This model was studied in~\cite{GMW23} (see also~\cite{ASW15}), and the Legendre-Fenchel transform of the SCGF was computed:
$$
\varphi(t,u,0)  := \sup_{\lambda } \left( u \lambda -  \theta(t,\lambda,0) \right)  =
 \begin{cases} |u| \sqrt{\frac{2}{t}}, & |u| \leq \min\left\{ \sqrt{2t} , \beta_c \right\},  \\  1+ \frac{u^2}{2t}, & \mbox{else}, \\ \infty,  & |u| > \beta_c . \end{cases}
$$
\begin{figure*}[!h]
    \centering
    \begin{minipage}{0.49\textwidth}
    \includegraphics[width=\textwidth]{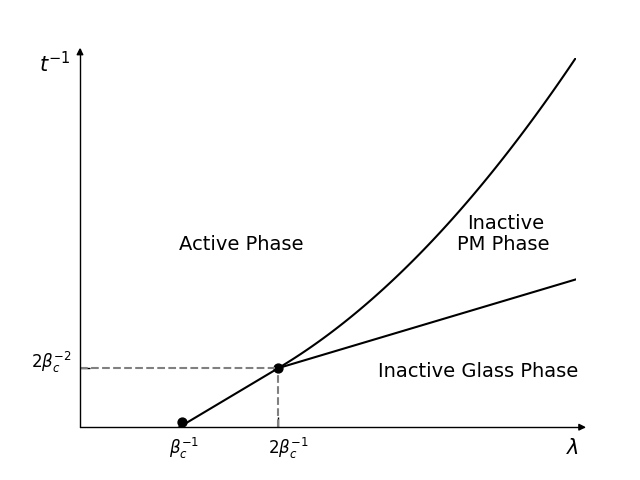}
    \end{minipage}
    \begin{minipage}{0.49\textwidth}
    \includegraphics[width=\textwidth]{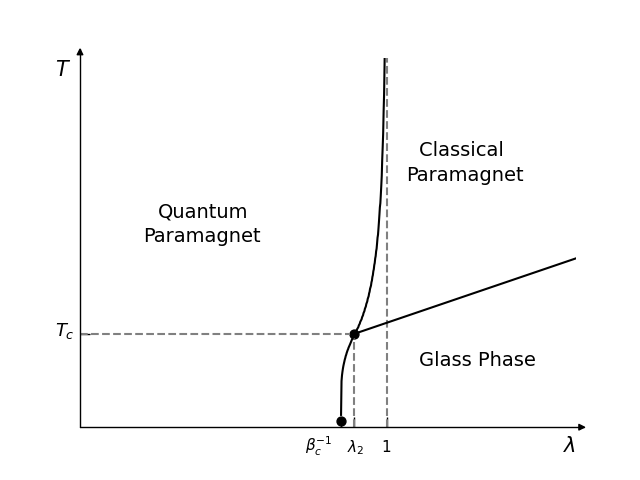}
    \end{minipage}
	\caption{\small Comparison of the trajectory REM phase diagram and the QREM static phase diagram \\[1ex]
        1. The left depicts the dynamical phase diagram as a function of the potential strength $\lambda$ and the inverse of the trajectory length $t^{-1}$. The full lines indicate phase transitions between the Active Phase and the two Inactive Phases. For $\lambda \leq \lambda_1 \coloneqq \beta_c^{-1}$ the system is in the Active Phase for all times, whereas for $\lambda_1 < \lambda \leq \lambda_{2} \coloneqq 2 \beta_c^{-1}$ the model undergoes a phase transition for large times to the Inactive Glass Phase. For $\lambda > \lambda_2$ all three phases occur at different trajectory lengths. The unique triple point is located at $\lambda = \lambda_2$ and $t_c = \beta_c^2/2$. \\
        2. The right shows the static QREM phase diagram in terms of the potential strength $\lambda$ and the temperature $T = \beta^{-1}$ at $ s = 0 $. The QREM phase diagram consists of three phases: a quantum paramagnetic (QP) phase, a classical glass phase, and a classical paramagnetic phase. For $\lambda \leq \lambda_1 \coloneqq \beta_c^{-1}$, the model is for all temperatures $T \geq 0$ in its quantum paramagnetic phase. If $\lambda_1 < \lambda \leq \lambda_2 \coloneqq  \frac{\beta_c}{\mathrm{arcosh}(2)}$, the system undergoes a transition from the QP phase to its glass phase while the temperature is lowered; and for $\lambda_2 < \lambda < \lambda_3 \coloneqq  1$ there are two phase transitions: first from the QP phase to the classical paramagnetic phase and then to the glass phase for decreasing temperatures. In contrast to the active phase, the QP phase disappears for $\lambda \geq \lambda_3$. The unique triple point is located at $\lambda = \lambda_2$ and $T = T_c \coloneqq \mathrm{arcosh}(2)$.
   } 
	\label{fig:phase}
\end{figure*}
Theorem~\ref{thm:mainasym} hence yields the same dynamical phase diagram as the all-to-all dynamics~\cite{GMW23}. It is made up of three regimes depicted in~Figure~\ref{fig:phase}, which we briefly summarize:
\begin{enumerate}
\item  An  {\em active} dynamical phase in which the Markov generator $ W $ dominates over the tilting, and which is characterized by $ \theta(t,\lambda,0) = 0 $ and the specific activity being unity, 
$
- \partial_s  \theta(t,\lambda,0) = 1 $.
It is separated from the remaining regimes by a first-order transition line.
This regime persists for all $ | \lambda | < \beta_c (2t)^{-1} + \beta_c^{-1} $ in case $ t^{-1} < 2 \beta_c^{-2} $ and $ |\lambda | <  \sqrt{2 t^{-1}} $ in case $ t^{-1} \geq  2 \beta_c^{-2} $. 
\item A regime of vanishing activity, $
- \partial_s  \theta(t,\lambda,0) = 0 $, 
which occurs for $ t^{-1}  <2 |\lambda| \beta_c^{-1}  - 2  \beta_c^{-2}  $ and which is dominated by the REM's extreme values where the system localizes. This regime is related to the spin-glass phase of the REM and 
we call this the {\em Inactive Glass} dynamical phase.

\item The remaining parameter regime corresponds to a second inactive regime which we term {\em Inactive Paramagnetic} dynamical phase. 
It occurs only if $ t^{-1} > 2/\beta_c^2 $ and is related to the classical paramagnetic phase of the REM.
 \end{enumerate}

Remarkably, the large deviations of the trajectory observable $ U_t $ are unchanged compared to the all-to-all dynamics for the simple random walk, where several visits of the same site are overwhelmingly more likely. In that sense, Theorem~\ref{thm:mainasym} is another striking consequence of the roughness of the REM potential, which causes the operator $ e^{-s} T - \lambda U $ to almost decompose direct into a sum of  its potential part and the transversal field $T$. 
Compared to the all-to-all dynamics, differences can be found on the level of the free energy.  
For the QREM it is given for all $ \beta , \lambda \geq  0 $, $ s \in \mathbb{R} $ by~\cite{Manai:2020ta}
 \begin{equation}\label{eq:pQREM}
 \lim_{N\to \infty} \frac{1}{N} \ln \frac{1}{2^N} \tr e^{ \beta (e^{-s} T - \lambda U) } 
 = \max\left\{\ln\cosh(\beta e^{-s}), p_{\textrm{REM}}(\beta\lambda) \right\} .
 \end{equation}
 In the all-to all dynamics, the logarithm of the hyperbolic cosine in the right side is changed to $\beta e^{-s} - \ln 2 $, cf.~\cite[Thm 5.1]{GMW23}. \\

The QREM can be naturally seen as the limit of the class of $p$-spin models. In fact, the free energy is known  \cite{MW20b, KMW25a} to converge to that of QREM as $ p \to \infty $. In addition, aging in the trap dynamics of $ p $-spin glasses belongs to the REM universality class~\cite{Arous:2008}. It is therefore natural to expect that a similar statement holds also for the SCGF, namely that the limit $ N \to \infty $ for a $ p $-spin glass and the subsequent limit $ p \to \infty $ converges to the right side in~\eqref{eq:mainasym}. As a lower bound, Lemma~\ref{lem:lb} below shows that this applies.

Arguably, the most studied spin glass in a transversal field is the Quantum Sherrington-Kirkpatrick model with $ p = 2 $ , where recently an infinite-dimensional Parisi formula has been established \cite{MW24}. Although we do not expect to gather insight into a closed-form expression for the trajectory phase diagram of the
Sherrington-Kirkpatrick model, it might be interesting to see whether the trajectory phase diagram in case $ p = 2 $ also only has two phases as its free energy analogue~\cite{LMRW21,Young17}.

\section{Proof of the main result}\label{sec:proof}

\subsection{Lower bound}
The proof of Theorem~\ref{thm:mainasym} is based on asymptotically coinciding upper and lower bounds. The latter is summarized in the following. Its proof is essentially contained in~\cite{GMW23}. 

\begin{lemma}\label{lem:lb}
    Let $ U :\mathcal{Q}_N \to \mathbb{R} $ be such that 
    \begin{equation}\label{eq:lemlb} \lim_{N\to \infty} \frac{1}{N 2^N} \sum_{\pmb{\sigma} \in \mathcal{Q}_N} U(\pmb{\sigma}) = 0 .
    \end{equation}
    Then for any $ t> 0 $, $ s,\lambda \in \mathbb{R} $:
    \begin{equation}
    \liminf_{N\to \infty} \frac{1}{tN} \ln \langle \fs | e^{t W_{\lambda,s}} | \fs \rangle  \geq \max\left\{ e^{-s}   , t^{-1} p_{U}(t \lambda )  \right\}  -1 
    \end{equation}
    where $\  \displaystyle  p_U(\beta) \coloneqq\liminf_{N\to \infty} \frac{1}{N} \ln \frac{1}{2^N} \sum_{\pmb{\sigma} \in \mathcal{Q}_N} e^{\beta U(\pmb{\sigma}) } $ with $ \beta \in \mathbb{R} $.
\end{lemma}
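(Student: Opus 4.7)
The plan is to establish the bound separately for each of the two arguments of the maximum; since the $\liminf$ is at least the larger of any two of its lower bounds, it then suffices to show
\begin{equation*}
\liminf_{N\to\infty} \frac{1}{tN} \ln \langle \fs| e^{tW_{\lambda,s}} | \fs\rangle \;\geq\; e^{-s} - 1
\qquad\text{and}\qquad
\liminf_{N\to\infty} \frac{1}{tN} \ln \langle \fs| e^{tW_{\lambda,s}} | \fs\rangle \;\geq\; t^{-1} p_U(t\lambda) - 1 .
\end{equation*}
Both inequalities will come from the Peierls--Bogoliubov inequality $\langle \psi | e^A | \psi\rangle \geq \exp\langle \psi | A | \psi \rangle$ (valid for any self-adjoint $A$ and unit vector $|\psi\rangle$) applied to two different test vectors.

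For the first inequality, I would insert the flat vector $|\fs\rangle$ itself. Because $|\fs\rangle$ is the top eigenvector of every Pauli $X_j$ with eigenvalue $1$, it satisfies $T|\fs\rangle = N|\fs\rangle$, while hypothesis~\eqref{eq:lemlb} asserts $\langle \fs | U | \fs\rangle = 2^{-N} \sum_{\pmb{\sigma}} U(\pmb{\sigma}) = o(N)$. Peierls--Bogoliubov then yields
\begin{equation*}
\frac{1}{tN}\ln\langle \fs | e^{tW_{\lambda,s}} | \fs\rangle \;\geq\; \frac{1}{N}\bigl(e^{-s} N - N + \lambda\, o(N)\bigr) \;=\; e^{-s} - 1 + o(1) .
\end{equation*}

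For the second inequality, I would exploit that $T$ is entrywise non-negative in the computational basis, so that $e^{-s}T + \lambda U$ has non-negative off-diagonal entries; consequently $e^{tW_{\lambda,s}}$ itself has non-negative matrix elements in this basis. Expanding $|\fs\rangle = 2^{-N/2}\sum_{\pmb{\sigma}}|\pmb{\sigma}\rangle$ and keeping only the diagonal contributions one obtains
\begin{equation*}
\langle \fs | e^{tW_{\lambda,s}} | \fs\rangle \;\geq\; \frac{1}{2^N}\sum_{\pmb{\sigma}} \langle \pmb{\sigma} | e^{tW_{\lambda,s}} | \pmb{\sigma}\rangle .
\end{equation*}
Applying Peierls--Bogoliubov to each $|\pmb{\sigma}\rangle$ together with $\langle\pmb{\sigma}|T|\pmb{\sigma}\rangle=0$ produces $\langle \pmb{\sigma} | e^{tW_{\lambda,s}} | \pmb{\sigma}\rangle \geq e^{-tN} e^{t\lambda U(\pmb{\sigma})}$, and taking $\frac{1}{tN}\ln$ of the sum gives
\begin{equation*}
\frac{1}{tN}\ln\langle \fs | e^{tW_{\lambda,s}} | \fs\rangle \;\geq\; -1 + \frac{1}{tN} \ln \frac{1}{2^N}\sum_{\pmb{\sigma}} e^{t\lambda U(\pmb{\sigma})} ,
\end{equation*}
whose $\liminf$ is exactly $t^{-1} p_U(t\lambda) - 1$ by definition of $p_U$.

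I do not anticipate any serious obstacle: the whole argument consists of two direct applications of Jensen's inequality in its Peierls--Bogoliubov guise. The only mildly delicate point is the Metzler/positivity observation used to discard the off-diagonal matrix elements in the second step, which is however standard.
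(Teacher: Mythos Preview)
Your proof is correct and follows essentially the same route as the paper: the paper also applies Jensen's inequality (your ``Peierls--Bogoliubov'') first to $|\fs\rangle$ to get the $e^{-s}-1$ bound, and then uses non-negativity of the semigroup's matrix elements to drop to the diagonal sum before applying Jensen to each $|\pmb{\sigma}\rangle$ for the $t^{-1}p_U(t\lambda)-1$ bound. The only difference is terminology.
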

\begin{proof}
 We use Jensen's inequality and the eigenvalue equation $ T |\fs\rangle = N | \fs \rangle $ to conclude
$$
 \ln \langle \fs | e^{t W_{\lambda,s}} | \fs \rangle \geq t  \langle \fs | W_{\lambda,s} | \fs \rangle  = N t \ (e^{-s} -1 ) +   \frac{t \lambda}{2^N} \sum_{\pmb{\sigma}} U(\pmb{\sigma}) .
$$
By assumption, the lower limit of the sum on the right side converges to zero, when deviding by $ t N $. For another lower bound, which is sharper in case $ t  e^{-s}   <  p_{U}(t \lambda )   $, we estimate using the non-negativity of the matrix elements of the semigroup:
\begin{align*}
\langle \fs | e^{t W_{\lambda,s}} | \fs \rangle = \frac{1}{2^N} \sum_{\pmb{\sigma}, \pmb{\tau}}  \langle \pmb{\tau}  | e^{t W_{\lambda,s}} | \pmb{\sigma} \rangle & \geq  \frac{1}{2^N} \sum_{\pmb{\sigma}}  \langle \pmb{\sigma}  | e^{t W_{\lambda,s}} | \pmb{\sigma} \rangle \\
& \geq  \frac{1}{2^N} \sum_{\pmb{\sigma}} \exp( \langle \pmb{\sigma} | \ t W_{\lambda,s} | \pmb{\sigma} \rangle) = e^{-tN}  \frac{1}{2^N} \sum_{\pmb{\sigma}} e^{t\lambda U(\pmb{\sigma})} 
\end{align*}
The second bound is again by Jensen's inequality.  This yields the alternative lower bound. 
\end{proof}

\subsection{Large deviations of the density of states measure}
While the lower bound stems from~\cite{GMW23}, the upper bound requires a new idea. It is based on controlling the spectral density of the high-energy QREM eigenfunctions in the flat state. More precisely, we fix $ \lambda , \Gamma > 0 $ and $ \delta > 0 $ and  set 
\begin{equation}\label{eq:proj}
Q_{\delta} := 1_{(\delta N,\infty)}\left[  \Gamma T- \lambda U  \right],  
\end{equation}
the spectral projection of the linear operator $  \Gamma  T- \lambda U  $ associated to the open energy interval $ (\delta N,\infty) $.  
By the known relation of the asymptotic behavior of the Laplace transform and the Legendre-Fenchel transform, the main result, Theorem~\ref{thm:mainasym}, can be recast as an asymptotic result on the density of states measure in the flat states, that is, $  \langle \fs | Q_{\delta \lambda } | \fs \rangle $ with varying $ \delta \lambda $.  In our proof, the main stumbling block is to control this quantity in case $  \delta \lambda > \Gamma $, which is the content of
\begin{proposition}\label{lem:qlargedev}
For any $ \lambda \delta >\Gamma \geq  0, $ and $ \mathbb{P} $-almost all realizations of the REM:
\begin{equation}\label{eq:qlargedev}
	\limsup_{N\to \infty} \frac{1}{N} \ln \langle \fs | Q_{\delta \lambda } | \fs \rangle  \leq   - \begin{cases} \frac{\delta^2}{2} , & \delta \leq \beta_c, \\
		\infty , & \delta < \beta_c . 
		\end{cases}   
\end{equation}
\end{proposition}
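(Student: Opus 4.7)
The plan is to reduce the flat-state density of states to two independent estimates via spectral decomposition. Letting $ \{(E_k, \psi_k)\} $ denote the eigenpairs of $ H := \Gamma T - \lambda U $, one has
\[
\langle \fs | Q_{\delta\lambda} | \fs \rangle \;=\; \sum_{E_k > \delta\lambda N} |\langle \fs | \psi_k \rangle|^2 ,
\]
so it suffices to prove (i) the number of terms in the sum is at most $|A_0| := |\{\pmb{\sigma} : -U(\pmb{\sigma}) > \delta N\}|$ up to subexponential corrections, and (ii) $|\langle \fs | \psi_k\rangle|^2 \leq C/2^N$ uniformly, with $C$ deterministic. A Gaussian tail estimate combined with Borel--Cantelli yields $|A_0| \leq 2^N e^{-\delta^2 N/2 + o(N)}$ almost surely for $\delta \leq \beta_c$, and $A_0 = \emptyset$ eventually almost surely for $\delta > \beta_c$. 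Multiplying (i) and (ii) then delivers both cases.

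Both (i) and (ii) will come from a Feshbach--Schur reduction. Pick $\eta \in (\Gamma/\lambda,\,\delta)$ — a nonempty interval by the hypothesis $\lambda\delta > \Gamma$ — let $P$ project onto $A_\eta := \{\pmb{\sigma} : -U(\pmb{\sigma}) > (\delta - \eta) N\}$, and set $\bar P := \mathbbm{1} - P$. The operator inequality
\[
\bar P H \bar P \;\leq\; \bigl( \Gamma + \lambda(\delta - \eta) \bigr) N \;<\; \delta\lambda N
\]
on $\mathrm{Ran}(\bar P)$ opens a spectral gap of size $\lambda(\eta - \Gamma/\lambda)N$, so for $E > \delta\lambda N$ the resolvent $(E - \bar P H \bar P)^{-1}$ is well-defined on $\mathrm{Ran}(\bar P)$, and the Schur complement puts the high-energy spectral problem for $H$ in bijection with the self-consistent eigenvalue problem for
\[
H_{\mathrm{eff}}(E) \;:=\; P H P - \Gamma^2\, P T \bar P \, (E - \bar P H \bar P)^{-1}\, \bar P T P
\]
on the $|A_\eta|$-dimensional space $\mathrm{Ran}(P)$. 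The deterministic $\ell^p$-resolvent bounds for QREM Hamiltonians developed elsewhere in the paper then imply that $H_{\mathrm{eff}}(E)$ differs from the diagonal operator $-\lambda P U P$ by a self-adjoint correction of operator norm $O(1)$, uniformly in $N$ and in $E > \delta\lambda N$. Consequently the eigenvalues of $H$ above $\delta\lambda N$ are, up to $O(1)$ shifts, in bijection with the deep values $\{-\lambda U(\pmb{\sigma}) : \pmb{\sigma} \in A_\eta,\ -\lambda U(\pmb{\sigma}) > \delta\lambda N\}$, of which there are $|A_0|$; the associated eigenvectors are approximately localized, each near a distinct $\pmb{\sigma}_0 \in A_0$.

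For (ii), using $\bar P \psi = \Gamma(E - \bar P H \bar P)^{-1} \bar P T P \psi$ one decomposes
\[
\langle \fs | \psi \rangle \;=\; \langle \fs | P \psi \rangle + \Gamma \bigl\langle (E - \bar P H \bar P)^{-1} \bar P \fs \,,\, T P \psi \bigr\rangle .
\]
By the perturbation analysis from step (i), $P\psi$ is essentially $|\pmb{\sigma}_0\rangle$ with exponentially decaying tails, so the first term contributes at the desired scale $1/\sqrt{2^N}$. The main obstacle lies in the second term: a naive Cauchy--Schwarz there introduces the factor $\|\bar P|\fs\rangle\| = O(1)$ and destroys the bound. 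This is exactly the place where the deterministic $\ell^p$-resolvent estimates become essential — they exploit that $T P \psi$ is supported on the $O(N)$-site Hamming neighborhood of $\pmb{\sigma}_0$, and that $\bar P |\fs\rangle$ is nearly uniform, to pull out the correct $1/\sqrt{2^N}$ scale from the matrix element despite the large $\ell^2$-mass of $\bar P|\fs\rangle$. Squaring and assembling (i) and (ii) gives
\[
\langle \fs | Q_{\delta \lambda} | \fs \rangle \;\leq\; C\, \frac{|A_0|}{2^N}\,(1 + o(1)) \;\leq\; C\, e^{-\delta^2 N/2 + o(N)} ,
\]
which is the claimed rate for $\delta \leq \beta_c$, while the super-exponential regime $\delta > \beta_c$ is immediate from $A_0 = \emptyset$ eventually almost surely. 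The hard part is establishing the uniform $\ell^p$-resolvent bound that rescues the second term above; the rest of the argument is perturbative packaging around it.
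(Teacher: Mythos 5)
Your step (ii) is where the argument breaks. You claim a \emph{uniform per-eigenvector} bound $|\langle \fs | \psi_k\rangle|^2 \leq C/2^N$, and your sketch towards it assumes ``$P\psi$ is essentially $|\pmb{\sigma}_0\rangle$ with exponentially decaying tails,'' i.e.\ $\|P\psi\|_{\ell^1} = O(1)$. That is genuine exponential localization, and it is precisely what the paper says it does \emph{not} prove (``although we do not quite prove localization''). Without it, the best you can extract from the $\ell^1$-resolvent estimates and the Schur reduction is an $\ell^\infty$ bound on the dual vector $\varphi(E)$, which via Cauchy--Schwarz gives only
\[
|\langle \fs | \psi(E)\rangle|^2 \;=\; 2^{-N} |\langle \varphi(E) | \psi^<(E)\rangle|^2 \;\leq\; 2^{-N}\,\|\varphi(E)\|_\infty^2\,\|\psi^<(E)\|_1^2 \;\leq\; C\,\frac{|\mathcal{L}_\eta|}{2^N},
\]
i.e.\ an extra factor of the (exponentially large) set size $|\mathcal{L}_\eta|$. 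Multiplying by your eigenvalue count $|A_0|\approx|\mathcal{L}_\eta|$ then loses an entire exponential order relative to the claimed rate. Your remark that ``the hard part is establishing the uniform $\ell^p$-resolvent bound that rescues the second term'' names the difficulty but does not resolve it: the $\ell^p$ bounds by themselves cannot replace localization of $P\psi$, because the $\ell^1$-mass of $P\psi$ is not controlled pointwise, only in $\ell^2$.

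The paper avoids proving localization entirely by never bounding a single eigenvector at a time. Instead it writes $\langle \fs | Q_{\delta\lambda}|\fs\rangle = 2^{-N}\sum_E |\langle\varphi(E)|\psi^<(E)\rangle|^2$, chops the spectral window into exponentially short subintervals, Taylor-expands $E\mapsto\varphi(E)$ around each center $E_n$ (using the $\ell^\infty$ bounds on the derivatives $\varphi^{(j)}$ from Lemma~\ref{lem:phiest}) so that the test vector becomes $E$-independent within each subinterval, and then applies Bessel's inequality across the eigenvalues in that subinterval:
\[
\sum_{E\in I_\varepsilon(n)} |\langle \varphi^{(j)}(E_n) | \psi^<(E)\rangle|^2 \;\leq\; \|\varphi^{(j)}(E_n)\|_2^2 \;\leq\; |\mathcal{L}_\eta|\,\sup_{\pmb{\sigma}}|\langle\pmb{\sigma}|\varphi^{(j)}(E_n)\rangle|^2 .
\]
Here the orthonormality of the eigenbasis absorbs the sum over eigenvalues, producing a single factor $|\mathcal{L}_\eta|$ rather than two, and yielding the sharp rate $-\delta^2/2$ after optimizing $\varepsilon$. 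This interval decomposition plus Taylor plus Bessel is the key mechanism your proposal is missing; the Schur reduction and the $\ell^p$-resolvent bounds, which you do use, serve only to establish the uniform $\ell^\infty$-control of $\varphi$ and its derivatives that feeds into that mechanism.

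As a secondary point, the case split in the displayed inequality should read $\delta < \beta_c$ versus $\delta \geq \beta_c$; and for $\delta \geq \beta_c$ one should not merely invoke $A_0 = \emptyset$, since the claim is about eigenvalues of $H$ rather than classical energies — one needs the eigenvalue shift control (Lemma~\ref{lem:shift}) or the ground-state asymptotics of \cite{MaWa22} to conclude that $Q_{\delta\lambda} = 0$ eventually almost surely.
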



Three remarks are in order:
\begin{enumerate}
\item 
In case $ \Gamma = 0 $, the above statement reduces to the basic large-deviation result concerning the equality of the empirical distribution and the probability:
\begin{equation}\label{eq:clLD}
	\lim_{N\to \infty} \frac{1}{N} \ln  \frac{| \mathcal{L}_\delta| }{2^N}  =   - \begin{cases} \frac{\delta^2}{2} , & \delta < \beta_c, \\
		\infty , & \delta \geq \beta_c . 
		\end{cases}
\end{equation}
where $ | (\cdot) | $ denotes the volume of the set and 
\begin{equation}\label{eq:ldset}
	\mathcal{L}_\delta := \left\{ \pmb{\sigma} \in \mathcal{Q}_N \ | \ U(\pmb{\sigma}) < - \delta N \right\} 
\end{equation}
denotes the set of extreme negative deviations of the REM. The case $ \delta <\beta_c $ follows from the Legendre-Fenchel duality 
$$
 \sup_{\beta> 0 } \left( \beta \delta - p_{\mathrm{REM}}(\beta) \right) =  \begin{cases} \frac{\delta^2}{2} , & \delta \leq \beta_c, \\
		\infty , & \delta > \beta_c . 
		\end{cases}
$$
The case $ \delta \geq \beta_c $ is consistent with the well-known asymptotics of the REM's minimum energy
\begin{equation}\label{eq:REMmin}
    \min_{\pmb{\sigma}\in \mathcal{Q}_N} U(\pmb{\sigma}) = - \beta_cN + \frac{\ln(N \ln 2)}{2 \beta_c} + \mathcal{O}(1),
\end{equation} 
 see e.g. \cite{Bov06}.
\item Replacing the flat state by the tracial state in Proposition~\ref{lem:qlargedev}, previous works on the spectral properties of the QREM~\cite{Manai:2020ta,MaWa22} yield
\begin{align}\label{eq:traceLD}
	\lim_{N\to \infty} \frac{1}{N} \ln 2^{-N } \tr Q_{\delta \lambda }  & = - \begin{cases} \frac{\delta^2}{2} , & \delta < \beta_c \\
		\infty , & \delta \geq \beta_c \end{cases}  
\end{align}
for $ \lambda \delta >\Gamma \geq  0. $ That is, $\tr Q_{\delta \lambda }$ and the number of classical large deviations $| \mathcal{L}_\delta|$ agree on an exponential scale in this regime.  
We have included a short proof in the Appendix. 
\item In view of \eqref{eq:traceLD}, it is tempting to conjecture that the bound in \eqref{eq:qlargedev} is in fact sharp for $ \delta \neq \beta_c$, that is, one has the following asymptotic equality
\begin{equation}\label{eq:largdevflat}
	\lim_{N\to \infty} \frac{1}{N} \ln \langle \fs | Q_{\delta \lambda } | \fs \rangle =  - \begin{cases} \frac{\delta^2}{2} , & \delta < \beta_c, \\
		\infty , & \delta \geq \beta_c  
		\end{cases}     
\end{equation}
for $\delta \lambda > \Gamma$. Unfortunately, \eqref{eq:largdevflat} follows from Theorem~\ref{thm:mainasym} only in the subregime $\delta \lambda > 2 \Gamma$ or $\delta \geq \beta_c$. The main obstacle for $ \Gamma < \delta\lambda \leq 2 \Gamma $ is that the moment generating function will be governed by the transversal field and thus yields no sharp information about $ \langle \fs | Q_{\delta \lambda } | \fs \rangle. $ We believe however that the methods used to prove Proposition~\ref{lem:qlargedev} can be refined to establish the conjectured identity \eqref{eq:largdevflat} for all $\lambda \delta > \Gamma$. As this requires considerable effort, a full proof is beyond the scope of this note.
\end{enumerate}

The proof of the quantum generalization~\eqref{eq:qlargedev} of this large-deviation principle  is the core of the novel technical result, and can be found in Subsection~\ref{subsec:PP}. It starts by 
rewriting eigenprojection $ Q_{\delta \lambda } $ in terms of the corresponding normalized eigenfunctions $ \psi(E) $ with energies $ E > \delta \lambda N   $,
	\begin{equation}\label{eq:efexp} 
		\langle \fs | Q_{\delta \lambda } | \fs \rangle = \sum_{E > \delta \lambda N } \left| \langle \fs |  \psi(E)\rangle \right|^2 . 
	\end{equation} 
	As we will establish through a representation, these eigenfunctions are in correspondence to the set $ \mathcal{L}_\delta  $. We will also prove  $ \ell^1 $-properties of these functions.\\

\subsection{Proof of Theorem~\ref{thm:mainasym}}
Taking Proposition~\ref{lem:qlargedev} for granted, the proof of the main results proceeds as follows.
\begin{proof}[Proof of Theorem~\ref{thm:mainasym}] 
For a lower bound, we employ Lemma~\ref{lem:lb} to the case of the REM. Its assumption~\eqref{eq:lemlb}  is satisfied for $\mathbb{P} $-almost all realizations of the REM by the strong law of large numbers. By the known almost sure convergence~\eqref{eq:REMp} of the REM's pressure, one has $ p_U = p_{\textrm{REM}} $.  
This yields~\eqref{eq:mainasym} as a lower bound.

A complementing upper bound is based on Proposition~\ref{lem:qlargedev}. We set $ \Gamma = e^{-s} $ and pick $ \varepsilon > 0 $ arbitrary to estimate:
\begin{align}\label{eq:upperb}
e^{tN}  \langle \fs | e^{t W_{\lambda,s} } | \fs \rangle & =   \langle \fs | e^{t ( \Gamma T-\lambda U)  } (1 - Q_{\Gamma + \varepsilon}) | \fs \rangle  +   \langle \fs |   e^{t ( \Gamma T-\lambda U)  } Q_{\Gamma + \varepsilon}  | \fs \rangle \notag \\
& \leq e^{t N (\Gamma + \varepsilon)}  \langle \fs |  (1 - Q_{\Gamma + \varepsilon})  | \fs \rangle  - \int_{ (\Gamma + \varepsilon) N}^\infty \mkern-20mu  e^{tE} \langle \fs |  d Q_{E/N}  | \fs \rangle  \notag \\
& \leq e^{t N (\Gamma + \varepsilon)}  + Nt \lambda \int_{ (\Gamma + \varepsilon)/ \lambda}^\infty \mkern-20mu e^{Nt\lambda x}  \langle \fs |   Q_{x \lambda }  | \fs \rangle  \ dx . 
\end{align}
In the second step, the second term was rewritten as an integral with respect to the spectral measure of $  \Gamma T-\lambda U $ and the flat vector. 
The third step results from an integration by parts, followed by a change of variables, and combining the second term with the first.
Based on Proposition~\ref{lem:qlargedev}, an asymptotic evaluation of the last integral yields the upper bound: 
$$
\limsup_{N\to \infty} \frac{1}{N t } \ln \int_{ (\Gamma + \varepsilon)/ \lambda}^\infty \mkern-20mu e^{Nt\lambda x}  \langle \fs |   Q_{x \lambda }  | \fs \rangle  \ dx \leq  t^{-1} p_{\textrm{REM}}(t \lambda ) , 
$$
which, in competition with the first term in~\eqref{eq:upperb}, implies 
$$
\limsup_{N\to \infty} \frac{1}{N t } \ln e^{tN}  \langle \fs | e^{t W_\lambda} | \fs \rangle \leq \max\left\{ \Gamma + \varepsilon   , t^{-1} p_{\mathrm{REM}}(t\lambda)  \right\}. 
$$
Since $ \varepsilon > 0 $ is arbitrary, this concludes the proof.
\end{proof}

\section{Properties of the QREM}

The remainder of this paper concerns the analysis of properties of the Quantum Random Energy Model and, in particular, the proof of the large deviation result, Proposition~\ref{lem:qlargedev}. However, we start with a topic of independent interest, namely $  \ell^p $-properties of the resolvent of more general, deterministic potentials on the Hamming cube. 

\subsection{Deterministic $ \ell^p $-analysis of the resolvent}\label{sec:L1}
In this subsection, we look at Hamiltonians of the form
$$
H := \Gamma T - U \quad \mbox{on $ \mathcal{H} \equiv \ell^2(\mathcal{Q}_N)  $ }
$$

with $ \Gamma \geq  0 $ and a multiplication operator corresponding to a fixed deterministic potential $ U : \mathcal{Q}_N \to \mathbb{R} $ (not necessarily the REM). Recall that the flat state~\eqref{eq:flat} is the $ \ell^2 $-normalized eigenvector of $ T $ corresponding to its maximal eigenvalue $ N $. Subsequently, we assume 
\begin{equation}\label{eq:resset}
E > \max\{ \max \mathrm{spec}(H) , \, \Gamma N , \,-  \inf U \} ,
\end{equation} 
and investigate properties of the resolvent operator $ R(E) := (E - H )^{-1} $. The following summarizes some basic facts in the regime~\eqref{eq:resset}. 
\begin{enumerate}
	\item The resolvent's matrix elements are all non-negative, $  \langle \pmb{\tau} | R(E) |  \pmb{\sigma} \rangle \geq 0 $ for any $  \pmb{\sigma},  \pmb{\tau} \in \mathcal{Q}_N $. This follows from the identity
    $$ \langle \pmb{\tau} | R(E) |  \pmb{\sigma} \rangle = \int_0^\infty e^{-tE} \langle \pmb{\tau} | e^{tH}  |  \pmb{\sigma} \rangle ds $$ 
    relating the resolvent to the semigroup and the Feynman-Kac representation for the latter, and the fact
    that $ T $ has non-negative matrix elements. 
	\item By the above relation to the semigroup, the resolvent's matrix elements are  monotone decreasing in the potential, i.e., $ U(\pmb{\sigma})  \leq U'(\pmb{\sigma}) $ for all $  \pmb{\sigma} \in \mathcal{Q}_N $ implies  
	\begin{equation}
	\langle \pmb{\tau} | R(E) |  \pmb{\sigma} \rangle \geq  \langle \pmb{\tau} | R'(E) |  \pmb{\sigma} \rangle 
	\end{equation}
	 for any $  \pmb{\sigma},  \pmb{\tau} \in \mathcal{Q}_N $. 
\end{enumerate}
Our main aim is to control the properties of the resolvent operator, when this operator acts on the $ \ell^p $-space, that is, $ R(E) f (\pmb{\sigma}) = \sum_{\pmb{\tau}}  \langle \pmb{\sigma} | R(E) |  \pmb{\tau} \rangle f(  \pmb{\tau} ) $ for $ f \in \ell^p(\mathcal{Q}_N) $. The main difference is the norm $ \| f \|_p := (  \sum_{\pmb{\tau}}  | f( \pmb{\tau}) |^p )^{1/p} $. Hence, 
\begin{equation}
L_p(E) := \sup_{ \| f\|_p = 1} \left\| R(E) f \right\|_p 
\end{equation}
stands for the operator norm of the resolvent on $ \ell^p(\mathcal{Q}_N) $. For $ p = 2, $ one has the spectral estimate $ L_2(E) \leq ( E -  \max \mathrm{spec}(H) ) ^{-1} $. By self-adjointness, one has the duality 
$ L_p(E) = L_q(E) $ for H\"older conjugated exponents $ p^{-1} + q^{-1} = 1 $ with $p,q \in [1,\infty]$. Indeed, the Hölder duality implies 
\begin{align*}
 L_p(E) &= \sup_{ \| f\|_p = 1} \left\| R(E) f \right\|_p = \sup_{ \| f\|_p = 1, \, \|g\|_{q} = 1} \left| \langle g, R(E) f \rangle \right| = \sup_{ \| f\|_p = 1, \, \|g\|_{q} = 1} \left| \langle R(E)^*g, f \rangle \right| \\&= \sup_{ \| f\|_p = 1, \, \|g\|_{q} = 1} \left| \langle R(E)g, f \rangle \right| = \sup_{ \| g\|_q = 1} \left\| R(E) f \right\|_q = L_q(E),  
 \end{align*}
which is even valid for $p=1$ and $q = \infty$ as we are considering finite-dimensional vector spaces. To cover the full range of $ p\in [1,\infty] $ with the help of interpolation, it therefore remains to investigate case $ p = 1 $. Due to the non-negativity of the resolvent's matrix elements, we have 
\begin{equation}
L_1(E) = \sup_{\pmb{\sigma} \in Q_N} \sum_{\pmb{\tau} \in  \mathcal{Q}_N}  \langle \pmb{\tau} | R(E) |  \pmb{\sigma} \rangle  = \sup_{\pmb{\sigma} \in Q_N} \sqrt{2^N} \ \langle \fs | R(E) |  \pmb{\sigma} \rangle 
\end{equation}
in terms of the flat state $|\fs \rangle$. An upper bound for $L_1(E)$ is achieved in the following. 
\begin{proposition}
Suppose~\eqref{eq:resset} and that
\begin{equation}
\gamma_N(E) :=  \sup_{\pmb{\sigma} \in  \mathcal{Q}_N}  \frac{1}{ N } \sum_{d(\pmb{\tau}, \pmb{\sigma} ) = 1}  \frac{U_-(\pmb{\tau}) } { E- U_-(\pmb{\tau})  } < \frac{E - \Gamma N }{\Gamma N} ,
\end{equation} 
where $ U_-(\pmb{\tau}) := \max \{ - U(\pmb{\tau}) , 0 \} $ denote the negative part. 
Then, for any   $ \pmb{\sigma} \in  \mathcal{Q}_N $: 
\begin{equation}\label{eq:L1est}
\sqrt{2^N} \ \langle \fs | R(E) |  \pmb{\sigma} \rangle \leq  \frac{1}{E-\Gamma N (1+\gamma_N(E) )} \frac{E } { E+ U(\pmb{\sigma}) } .
\end{equation}
\end{proposition}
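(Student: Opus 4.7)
The plan is to recast $h(\pmb{\sigma}):=\sqrt{2^N}\,\langle\fs|R(E)|\pmb{\sigma}\rangle=\sum_{\pmb{\tau}}\langle\pmb{\tau}|R(E)|\pmb{\sigma}\rangle$ as the unique solution of a discrete Poisson problem on the Hamming cube, and then to run a discrete maximum-principle argument on a suitably rescaled version of it. By property~1 in the list of basic facts, $h\geq 0$, and by self-adjointness of $R(E)$ one has $h=R(E)\mathbf{1}$, where $\mathbf{1}$ is the all-ones vector. Thus $(E-H)h=\mathbf{1}$, i.e.
\begin{equation*}
(E+U(\pmb{\sigma}))\,h(\pmb{\sigma}) \;=\; 1+\Gamma\sum_{d(\pmb{\tau},\pmb{\sigma})=1}h(\pmb{\tau}).
\end{equation*}
Because $E>-\inf U$ the multiplier $E+U(\pmb{\sigma})$ is strictly positive, so the rescaling $g(\pmb{\sigma}):=(E+U(\pmb{\sigma}))\,h(\pmb{\sigma})\geq 0$ is well defined, and the Poisson equation rewrites as the self-consistent identity
\begin{equation*}
g(\pmb{\sigma}) \;=\; 1+\Gamma\sum_{d(\pmb{\tau},\pmb{\sigma})=1}\frac{g(\pmb{\tau})}{E+U(\pmb{\tau})}.
\end{equation*}
The bound \eqref{eq:L1est} is then equivalent to the uniform estimate $g(\pmb{\sigma})\leq E/(E-\Gamma N(1+\gamma_N(E)))$.

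Next, I pick a maximiser $\pmb{\sigma}^*$ of the nonnegative function $g$ over the finite set $\mathcal{Q}_N$ and set $M:=g(\pmb{\sigma}^*)$. Bounding $g(\pmb{\tau})\leq M$ inside the self-consistent identity at $\pmb{\sigma}^*$ gives
\begin{equation*}
M \;\leq\; 1+\Gamma M\sum_{d(\pmb{\tau},\pmb{\sigma}^*)=1}\frac{1}{E+U(\pmb{\tau})}.
\end{equation*}
The decisive step is to dominate the neighbour sum by the quantity announced in the statement. Writing $U=U_+-U_-$ with $U_\pm\geq 0$ we have $E+U\geq E-U_-$, and combining this with the elementary identity $(E-x)^{-1}=E^{-1}(1+x/(E-x))$ valid for $0\leq x<E$, I read off
\begin{equation*}
\sum_{d(\pmb{\tau},\pmb{\sigma}^*)=1}\frac{1}{E+U(\pmb{\tau})} \;\leq\; \frac{N}{E}\bigl(1+\gamma_N(E)\bigr).
\end{equation*}

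Inserting this back produces the scalar inequality $M\leq 1+\Gamma N(1+\gamma_N(E))\,M/E$. Here the hypothesis $\gamma_N(E)<(E-\Gamma N)/(\Gamma N)$ is exactly what makes the coefficient of $M$ strictly less than one, so the inequality can be rearranged to $M\leq E/(E-\Gamma N(1+\gamma_N(E)))$, and dividing by $E+U(\pmb{\sigma})>0$ yields \eqref{eq:L1est}. The only mildly delicate step is the one-sided comparison $(E+U)^{-1}\leq(E-U_-)^{-1}$ used to reduce a sign-indefinite diagonal to a quantity involving only $U_-$, and it is precisely there that the standing assumption $E>-\inf U$ is used. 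Nothing in the argument is specific to the REM, so the resulting bound is genuinely deterministic; randomness will only enter later, when $\gamma_N(E)$ is estimated for the concrete choice of $U$.
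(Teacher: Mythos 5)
Your proof is correct, and it takes a genuinely different route from the paper's. The paper expands $\sqrt{2^N}\langle\fs|R(E)|\pmb{\sigma}\rangle$ via a twofold resolvent identity (first peeling off $(E-\Gamma T)^{-1}$ using $T|\fs\rangle = N|\fs\rangle$, then peeling off $(E+U)^{-1}$), which after rearrangement produces a self-consistent inequality for the fixed site $\pmb{\sigma}$: the error term is bounded above by $\gamma_N(E)\frac{\Gamma N}{E-\Gamma N}\sqrt{2^N}\langle\fs|R(E)|\pmb{\sigma}\rangle$ and absorbed into the left side. Your argument instead interprets $h = R(E)\mathbf{1}$ as the solution of the discrete Poisson problem $(E-H)h = \mathbf{1}$, passes to $g = (E+U)h$, and closes the estimate by a discrete maximum principle on $g$. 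The two approaches use the same ingredients — positivity of $\langle\pmb{\tau}|R(E)|\pmb{\sigma}\rangle$, the comparison $E+U \geq E-U_-$, and the split $\frac{1}{E-x} = \frac{1}{E}\bigl(1 + \frac{x}{E-x}\bigr)$ (the paper's equivalent form being $-\frac{U}{E+U} \leq \frac{U_-}{E-U_-}$) — and they reach the scalar inequality in essentially the same number of steps. What the paper's route buys is that it never introduces an auxiliary maximizer: it yields the site-by-site inequality directly, which is arguably cleaner when (as here) the bound is later applied pointwise in Lemma~\ref{lem:phiest}. What your route buys is conceptual transparency: the resolvent row sum is exposed as the solution of a screened discrete Poisson equation, and the estimate becomes a textbook comparison/maximum-principle argument, which might generalize more easily to non-Hamming graphs where the flat vector is not an eigenvector of the hopping term. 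Either way, your proof is sound and complete.
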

\begin{proof}
The proof is a simple consequence of a twofold application of the resolvent equation and the fact that $ T | \fs \rangle = N | \fs \rangle $:
\begin{align}
	\sqrt{2^N} \ \langle \fs | R(E) |  \pmb{\sigma} \rangle   & = \sqrt{2^N}\left[ \langle \fs | \frac{1}{E-\Gamma T} |  \pmb{\sigma} \rangle - \langle \fs | \frac{1}{E-\Gamma T} U R(E) |  \pmb{\sigma} \rangle \right] \notag \\
    &=\frac{1}{E-\Gamma N} \left[  1 - \sqrt{2^N} \langle \fs | U R(E) |  \pmb{\sigma} \rangle  \right] \notag \\
    &= \frac{1}{E-\Gamma N} \left[  1 - \sqrt{2^N} \langle \fs | U \left( \frac{1}{E+U} + \frac{1}{E+U} \Gamma T R(E)\right) |  \pmb{\sigma} \rangle  \right] \notag \\
	& =  \frac{1}{E-\Gamma N} \left[  1 - \frac{U(\pmb{\sigma}) } { E+U(\pmb{\sigma})  } - \sqrt{2^N} \langle \fs | \frac{U}{E + U} \Gamma T R(E)  |  \pmb{\sigma} \rangle \right] \notag \\
	& =  \frac{1}{E-\Gamma N} \left[   \frac{E } { E+U(\pmb{\sigma}) } - \Gamma\mkern-10mu \sum_{\substack{\pmb{\tau},\pmb{\tau}' \in \mathcal{Q}_N \\ d(\pmb{\tau}, \pmb{\tau}' ) = 1}}    \frac{U(\pmb{\tau}) } { E+U(\pmb{\tau}) }\    \langle \pmb{\tau}' | R(E) |  \pmb{\sigma} \rangle \right] .
\end{align}
Since the matrix elements of the resolvent are non-negative, the last term on the right is bounded from above by
$$
 \frac{\Gamma}{ E-\Gamma N } \left(\sup_{\pmb{\sigma}' \in  \mathcal{Q}_N} \sum_{d(\pmb{\tau}, \pmb{\sigma}' ) = 1}  \frac{U_-(\pmb{\tau}) } {E - U_-(\pmb{\tau}) }  \right) \sum_{\pmb{\tau}' \in  \mathcal{Q}_N}  \langle \pmb{\tau}' | R(E) |  \pmb{\sigma} \rangle = \gamma_N(E)  \frac{\Gamma N }{ E-\Gamma N }\  \sqrt{2^N} \langle \fs | R(E) |  \pmb{\sigma} \rangle . 
$$ 
This term may thus be subtracted from the left side. Since $ \gamma_N(E)  \frac{\Gamma N }{ E-\Gamma N } < 1 $, one arrives at the claim. 
\end{proof}
The above estimate~\eqref{eq:L1est} on the $ \ell^1 $-norm of the resolvent is only in spirit related to the classical $ L^p$-estimates for Schrödinger semigroups and resolvents~\cite{Sim82}. While the latter mainly cope with the singularities of potentials in continuous space, our estimate mainly chases the $ N $-dependence. However, they agree in that the behavior of the Laplacian is shown to prevail.\\

For an application in case $ U : \mathcal{Q}_N \to \mathbb{R} $ is a typical realization of the REM, we need the following probabilistic lemma.

\begin{lemma}
Let $ \varepsilon > 0 $. Furthermore, let $ \eta, \kappa > 0,  $ and $ E \geq  (\eta + \kappa) N $ and set 
$$ U_-(\pmb{\sigma}) = - U(\pmb{\sigma})  1[ - \eta N < U(\pmb{\sigma}) \leq 0 ] ,
$$ 
where $ U $ is the REM. 
Then there is a constant $ c = c(\varepsilon, \eta, \kappa) $ such that 
for all $ N $ large enough
\begin{equation}\label{eq:REMprob}
	\mathbb{P}\left(  \sup_{\pmb{\sigma} \in  \mathcal{Q}_N}  \frac{1}{ N } \sum_{d(\pmb{\tau}, \pmb{\sigma} ) = 1}  \frac{U_-(\pmb{\tau}) } { E- U_-(\pmb{\tau})  } \leq \varepsilon \right) \geq 1 - e^{-c N } . 
\end{equation}
\end{lemma}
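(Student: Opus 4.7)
The plan is to combine a Chernoff-type concentration estimate at a single vertex with a union bound over the $2^N$ configurations in $\mathcal{Q}_N$. First I would record two structural features of the problem. By the truncation in its definition, $U_-(\pmb{\tau}) \in [0, \eta N)$; combined with the hypothesis $E \geq (\eta + \kappa) N$ this gives $E - U_-(\pmb{\tau}) \geq \kappa N$, so each summand
\[
X_{\pmb{\tau}} := \frac{U_-(\pmb{\tau})}{E - U_-(\pmb{\tau})}
\]
is uniformly bounded by $M := \eta/\kappa$. In addition, for any fixed $\pmb{\sigma}$ the $N$ neighboring spin configurations are pairwise distinct, so independence of the REM makes the family $\{X_{\pmb{\tau}}\}_{d(\pmb{\tau},\pmb{\sigma})=1}$ independent and identically distributed.

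The decisive quantitative input is that the common mean vanishes as $N \to \infty$. Since $U(\pmb{\tau}) \sim \mathcal{N}(0, N)$, a short Gaussian computation (truncated first moment) yields $\mathbb{E}[U_-(\pmb{\tau})] \leq \sqrt{N/(2\pi)}$, whence
\[
\mu_N := \mathbb{E}[X_{\pmb{\tau}}] = \mathcal{O}(N^{-1/2}).
\]
I would then apply the exponential Markov inequality, using the convexity bound $e^{\lambda x} - 1 \leq (x/M)(e^{\lambda M} - 1)$ valid on $[0, M]$ to estimate the moment-generating function. For any $\lambda > 0$ this produces
\[
\mathbb{P}\!\Bigl(\frac{1}{N}\sum_{d(\pmb{\tau},\pmb{\sigma}) = 1} X_{\pmb{\tau}} \geq \varepsilon\Bigr) \leq \exp\!\Bigl(-N\Bigl[\lambda \varepsilon - \tfrac{\mu_N}{M}(e^{\lambda M} - 1)\Bigr]\Bigr).
\]
Because $\mu_N \to 0$, the correction term vanishes for every fixed $\lambda$, so the per-vertex Chernoff rate tends to $\lambda \varepsilon$ and can be made arbitrarily large by choice of $\lambda$.

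To finish, I would select $\lambda = \lambda(\varepsilon, \eta, \kappa)$ so that $\lambda \varepsilon > 2 \ln 2$; for all $N$ sufficiently large the per-vertex rate then exceeds, say, $\tfrac{3}{2}\ln 2$. A union bound over the $2^N$ base points $\pmb{\sigma} \in \mathcal{Q}_N$ absorbs the combinatorial factor and yields the claimed inequality \eqref{eq:REMprob} with any constant $c < \tfrac{1}{2}\ln 2$, depending only on $\varepsilon, \eta, \kappa$ through the choice of $\lambda$.

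The only delicate point in the argument is to guarantee an exponential rate strictly greater than $\ln 2$: a naive Hoeffding or Bernstein bound would only yield a rate proportional to $\varepsilon^2/M^2$ (respectively $\varepsilon \kappa/\eta$), which need not beat the $2^N$ factor from the union bound when $\varepsilon$ is small. It is precisely the vanishing of the mean $\mu_N$ that allows $\lambda$ to be taken arbitrarily large in the Chernoff estimate and thereby produce an exponential rate that survives the union bound.
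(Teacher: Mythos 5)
Your proof is correct, and the argument is a genuinely different route from the one in the paper. The paper splits the sum over neighbors into a bulk contribution (configurations with $U_-(\pmb{\tau}) < \alpha N$, which give a deterministic bound $\alpha N^2/(E-\alpha N)$) and a tail contribution, which is controlled by the Bernoulli counting variable $N_\alpha(\pmb{\sigma}) = \#\{\pmb{\tau}: U(\pmb{\tau}) \leq -\alpha N, d(\pmb{\tau},\pmb{\sigma})=1\}$. Because the success probability $p_{\alpha,N} \leq e^{-N\alpha^2/2}$ itself decays exponentially in $N$, the relative-entropy Chernoff bound for $N_\alpha$ produces an exponent $-cN^2$, which trivially beats the $2^N$ union-bound factor. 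Your approach instead applies a single Chernoff estimate directly to the sum $\sum X_{\pmb{\tau}}$, using boundedness $X_{\pmb{\tau}} \leq M = \eta/\kappa$ and the decay of the mean $\mu_N = \mathcal{O}(N^{-1/2})$; the latter is what lets the Chernoff parameter $\lambda$ be chosen arbitrarily large and hence the per-vertex rate pushed above $\ln 2$. Both proofs exploit the same underlying fact, namely that the summands are small except on events of small probability, but you encode it through the first moment whereas the paper encodes it through the tail probability of the indicator. Your version is shorter and produces exactly the claimed $e^{-cN}$; the paper's version produces an over-strong $e^{-cN^2}$ and makes the mechanism (dominance of the tail by rare deep sites, counted explicitly) more visible, which is conceptually aligned with the $\mathcal{L}_\delta$ picture used elsewhere in the proof. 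Your closing remark about why plain Hoeffding or Bernstein would not suffice is accurate and worth keeping.
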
 
\begin{proof}
We employ a union bound and estimate for fixed $ \pmb{\sigma} $ the probability of
\begin{align}
 \varepsilon N < \sum_{d(\pmb{\tau}, \pmb{\sigma} ) = 1}  \frac{U_-(\pmb{\tau}) } { E- U_-(\pmb{\tau})  } \leq & \ \frac{\alpha N^{2}}{E - \alpha N} +  \frac{\eta  N  }{E - \eta N}\ N_\alpha(\pmb{\sigma})  , \notag \\
& \mbox{with} \quad N_\alpha(\sigma) :=  \sum_{d(\pmb{\tau}, \pmb{\sigma} ) = 1}  1[  U(\pmb{\sigma}) \leq -\alpha N  ] , \quad \alpha > 0 .
\end{align}
Since 
$$
\mathbb{P}\left( U(\pmb{\sigma}) \leq -\alpha N \right) = \int_{-\infty}^{-\alpha \sqrt{N} } \mkern-10mu \exp\left( -\frac{u^2}{2}\right) \frac{du}{\sqrt{2\pi}} \leq \exp\left(- \frac{N \alpha^2}{2}\right)  =: p_{\alpha,N} ,
$$
a standard concentration of measure estimate for the Bernoulli counting variable $ N_\alpha(\pmb{\sigma})  $  shows that for any $ \delta > 0 $:
$$
\mathbb{P}\left( N_\alpha(\pmb{\sigma})  > \delta N \right) \leq \exp\left( - N  \delta \ln\left( \frac{\delta}{p_{\alpha,N}}\right) + N (1-\delta) \ln\left(\frac{1 - \delta}{1-  p_{\alpha,N}}\right) \right) . 
$$
The exponent on the right side is quadratically bounded, i.e., by $- N^2 c(\alpha,\delta) $ with some $ c(\alpha,\delta) > 0 $. Choosing $\alpha, \delta > 0$ small enough such that $ \frac{\alpha}{\eta - \alpha} + \frac{\eta \delta}{\kappa} < \varepsilon$ completes the proof. 
\end{proof}
\subsection{Eigenfunctions above the localization threshold} 
This subsection is devoted to properties of eigenfunctions $ \psi(E) \in \mathcal{H}  $ of the Quantum Random Energy Model (QREM) $$ H := \Gamma T - \lambda U $$ for energies $ E > \lambda \delta N > \Gamma N $. This includes the regime of localization close to the maximum of the spectrum near  $ \beta_c N$ covered in~\cite{MaWa22}, but substantially goes beyond it. In fact, although we do not quite prove localization~\cite{AizWar15}, we cover what is believed~\cite{LPS14,BSc25}  to be the full localization regime up to the optimal threshold at $ \Gamma N$.  \\

The main tool for establishing properties of eigenfunctions with energies $ E > \lambda \delta N > \Gamma N $ is the Schur complement representation corresponding to the direct sum decomposition of the Hilbert space $ \mathcal{H} \equiv \ell^2(\mathcal{Q}_N)  $ into $ \ell^2(\mathcal{L}_\eta) \oplus  \ell^2(\mathcal{Q}_N \backslash \mathcal{L}_\eta) $ with the extreme deviations set $  \mathcal{L}_\eta $ from~\eqref{eq:ldset} and 
\begin{equation}\label{eq:eta}
\eta := \delta - \varepsilon  \quad \mbox{with} \quad 0 < \varepsilon < \delta - \Gamma / \lambda  
\end{equation}
arbitrary.  
Accordingly, let $ \mathbbm{1}^< $ and, respectively, $  \mathbbm{1}^> := \mathbbm{1} -  \mathbbm{1}^< $ stand for the orthogonal projection onto $ \ell^2(\mathcal{L}_\eta)  $ and its orthogonal complement, respectively. We denote by
$$
 H^> :=   \mathbbm{1}^>  (  \Gamma T - \lambda U  )    \mathbbm{1}^> , \quad \mbox{on $  \ell^2(\mathcal{Q}_N \backslash \mathcal{L}_\eta)  $}
$$
the canonical restriction of $ H $  to the subspace, on which $ U(\pmb{\sigma}) \geq - \eta N $. 
The largest eigenvalue of $  H^> $ is known to be bounded  from above by a constant strictly smaller than~$ \lambda \delta N $. 

\begin{lemma}[Prop. 4.5 in \cite{MaWa22}; see also Thm.~2.3 in \cite{MaWa20}]\label{lem:normH>}
For  $ \lambda \delta  > \Gamma \geq 0 $ and~\eqref{eq:eta}, there is some $ c = c(\delta,\eta,\varepsilon,\lambda,\Gamma) > 0 $ such that  for all $ N $ large enough
\begin{equation}\label{eq:normH>}
 \mathbb{P}\left( \max \mathrm{spec} (H^>)   \leq \lambda ( \delta - \varepsilon/2) N \right) \geq  1 - e^{-c N }  . 
\end{equation}
\end{lemma}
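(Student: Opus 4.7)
I would establish Lemma~\ref{lem:normH>} by a Feshbach--Schur decomposition of $H^>$ with respect to a further partition of the restricted configuration space $\mathcal{L}_\eta^c$, using the $\ell^1$-resolvent estimates of Subsection~\ref{sec:L1} as the critical technical input. Pick an auxiliary threshold $\eta'' := \eta - \varepsilon/8$ and decompose
\[
\mathcal{L}_\eta^c = \mathcal{M} \sqcup \mathcal{S}, \qquad \mathcal{M} := \{\pmb{\sigma} : -\eta N \leq U(\pmb{\sigma}) < -\eta'' N\}, \qquad \mathcal{S} := \mathcal{L}_{\eta''}^c.
\]
Classical REM large deviations~\eqref{eq:clLD} give $|\mathcal{M}| \leq e^{-c_0 N} 2^N$ with probability at least $1 - e^{-c_1 N}$, while a first-moment bound using the tail estimate $\mathbb{P}(U(\pmb{\sigma}) \leq -\eta'' N) \leq e^{-(\eta'')^2 N/2}$ shows that the induced adjacency $P_\mathcal{M} T P_\mathcal{M}$ has operator norm $O(1)$ with overwhelming probability.

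Set $E_\ast := \lambda(\delta-\varepsilon/2) N$. By the Feshbach--Schur principle, the bound $E_\ast > \max \mathrm{spec}(H^>)$ is equivalent to the two conditions
\begin{enumerate}
\item[(i)] $E_\ast > \max \mathrm{spec}(H^>|_\mathcal{S})$, and
\item[(ii)] positivity on $\ell^2(\mathcal{M})$ of the Feshbach operator
\[
F(E_\ast) := \bigl(E_\ast - H^>|_\mathcal{M}\bigr) - \Gamma^2\, P_\mathcal{M}\, T\, (E_\ast - H^>|_\mathcal{S})^{-1}\, T\, P_\mathcal{M}.
\]
\end{enumerate}
For (i), I apply the $\ell^1$-resolvent estimate~\eqref{eq:L1est} to $H^>|_\mathcal{S}$, whose potential satisfies $-\lambda U|_\mathcal{S} \leq \lambda \eta'' N$. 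The probabilistic lemma~\eqref{eq:REMprob}, tuned with thresholds compatible with $E_\ast$ and the truncation at $-\eta'' N$, yields $\gamma_N(E_\ast)$ arbitrarily small with probability $\geq 1 - e^{-cN}$. Combined with~\eqref{eq:L1est}, this gives boundedness of $(E_\ast - H^>|_\mathcal{S})^{-1}$ as an operator on $\ell^1$; by $\ell^1$--$\ell^\infty$ duality and Riesz--Thorin interpolation, the same transfers to $\ell^2$, so $E_\ast \notin \mathrm{spec}(H^>|_\mathcal{S})$ by a margin of order $N$. The hypothesis $\lambda \delta > \Gamma$ is used here to ensure $E_\ast > \Gamma N$ so that~\eqref{eq:L1est} applies.

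For (ii), the diagonal piece satisfies $E_\ast - H^>|_\mathcal{M} \geq (\lambda\varepsilon/2) N\, \mathbbm{1}^\mathcal{M} - O(1)$, since $-\lambda U|_\mathcal{M} \leq \lambda \eta N = E_\ast - \lambda\varepsilon/2\, N$ and $\|\Gamma T|_\mathcal{M}\| = O(1)$. The Feshbach correction is then controlled by using the pointwise information in~\eqref{eq:L1est} for each matrix element of $(E_\ast - H^>|_\mathcal{S})^{-1}$: combining the sparsity of $\mathcal{M}$ with the explicit factors $E_\ast/(E_\ast + U(\pmb{\tau}))$ yields an operator norm strictly smaller than $\lambda\varepsilon/2\, N$, whence $F(E_\ast) > 0$.

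\textbf{Main obstacle.} The technical heart is Step (ii): the trivial bound $\|P_\mathcal{M} T (E_\ast - H^>|_\mathcal{S})^{-1} T P_\mathcal{M}\| \leq \|T\|^2 \|(E_\ast - H^>|_\mathcal{S})^{-1}\| = O(N)$ is of the same order as the diagonal gap and cannot by itself secure positivity when $\Gamma$ is not small compared to $\lambda\varepsilon/2$. The sharper estimate needed must exploit the pointwise content of~\eqref{eq:L1est}, which reflects the fact that high-energy eigenfunctions of $H^>|_\mathcal{S}$ are concentrated near the flat state where the REM potential averages to zero by the law of large numbers. This is precisely the new input that the $\ell^1$-analysis of Section~\ref{sec:L1} provides beyond the spin-glass phase analysis of~\cite{MaWa22}, and constitutes the main step that must be carried out with care.
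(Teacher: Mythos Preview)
The paper does not prove this lemma at all: it is quoted verbatim from \cite{MaWa22} (Proposition~4.5) and \cite{MaWa20} (Theorem~2.3), and is used here as a black box. So there is no in-paper proof to compare to; rather, the question is whether your proposed argument stands on its own.

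It does not, for two reasons.

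\textbf{Circularity in Step (i).} You propose to apply the $\ell^1$-resolvent estimate~\eqref{eq:L1est} to $H^>|_\mathcal{S}$ at the energy $E_\ast=\lambda(\delta-\varepsilon/2)N$ in order to conclude that $E_\ast$ lies above the spectrum. But the Proposition that yields~\eqref{eq:L1est} is stated under the hypothesis~\eqref{eq:resset}, one clause of which is precisely $E>\max\mathrm{spec}(H)$. That hypothesis is needed in the proof: the non-negativity and monotonicity of the resolvent matrix elements, on which the derivation of~\eqref{eq:L1est} rests, come from the semigroup representation $R(E)=\int_0^\infty e^{-tE}e^{tH}\,dt$, valid only for $E$ above the top of the spectrum. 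The crude bound $\max\mathrm{spec}(H^>|_\mathcal{S})\leq \Gamma N+\lambda\eta''N$ does not put $E_\ast$ in the resolvent set unless $\Gamma<\tfrac{5}{8}\lambda\varepsilon$, which is not implied by the hypotheses (where $\varepsilon$ may be arbitrarily small). So as written the argument assumes its conclusion. A continuation/bootstrap from large $E$ might salvage this, but you have not set one up, and it is not automatic.

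\textbf{Step (ii) is left open.} You correctly identify that the naive bound on the Feshbach correction $\Gamma^2 P_\mathcal{M}T(E_\ast-H^>|_\mathcal{S})^{-1}TP_\mathcal{M}$ is $O(N)$, of the same order as the diagonal gap $\lambda\varepsilon N/2$, and that something sharper exploiting the pointwise structure of~\eqref{eq:L1est} is required. But you do not actually carry this out; the paragraph labelled ``Main obstacle'' is a statement of what must be done, not a proof.

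Finally, note the logical flow of the paper: Lemma~\ref{lem:normH>} is an \emph{input} to Lemma~\ref{lem:phiest}, whose proof begins by restricting to the event in~\eqref{eq:normH>} and only then invokes the $\ell^1$-machinery of Section~\ref{sec:L1}. Your proposal inverts this dependency by trying to deduce Lemma~\ref{lem:normH>} from Section~\ref{sec:L1}. The proof in \cite{MaWa22,MaWa20} proceeds by a different route (a truncation of the deep holes of the REM combined with operator-norm/moment estimates, cf.\ the discussion around Lemma~\ref{lem:shift}), not via resolvent $\ell^p$-bounds.
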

This implies that for any $ E > \lambda \delta N $ the operator $ E - H^> $ is boundedly invertible with inverse $  \left(E-  H^> \right)^{-1}  $ on $  \ell^2(\mathcal{Q}_N \backslash \mathcal{L}_\eta)  $. For any eigenvector $ \psi(E) $ of $ H $ with such an energy, the two components 
\begin{equation}
	|  \psi^<(E) \rangle \  := \mathbbm{1}^< | \psi(E) \rangle, \qquad 
	|  \psi^>(E) \rangle \  := \mathbbm{1}^> | \psi(E) \rangle
\end{equation}
are linked by the following relation
\begin{equation}
|  \psi^>(E) \rangle =  R^>(E) A |  \psi^<(E) \rangle , \qquad\mbox{with}\quad R^>(E) := \left(E-  H^> \right)^{-1} , \; A :=  \mathbbm{1}^> \Gamma T  \mathbbm{1}^< . 
\end{equation}
Consequently, we may express the scalar product of $ | \psi(E) \rangle $ with any vector in terms of a scalar product on the subspace $ \ell^2(\mathcal{L}_\eta)  $ only. 
In~\eqref{eq:efexp}, the scalar product with the flat state is key:
\begin{equation*}
	\langle \fs |  \psi(E) \rangle = \frac{1}{\sqrt{2^N} } \sum_{\pmb{\sigma}\in \mathcal{L}_\eta} \langle \pmb{\sigma} | \psi^<(E) \rangle \left[ 1 + \sqrt{2^N} \langle \fs |  \mathbbm{1}^>  R^>(E) A |  \pmb{\sigma} \rangle \right]  =  \frac{1}{\sqrt{2^N} } \langle \varphi(E) | \psi^<(E) \rangle ,
\end{equation*}
with 
\begin{equation}\label{eq:defvarphi}
 | \varphi(E) \rangle := \sqrt{2^N}  \left( \mathbbm{1}^< + A^* R^>(E)  \mathbbm{1}^>\right)  | \fs \rangle \in  \ell^2(\mathcal{L}_\eta)  . 
\end{equation}
The main observation is that in the parameter regime of interest and with asymptotically full probability, this vector is in $ \ell^\infty(\mathcal{Q}_N)  $ with a norm bounded uniformly in $ N $. Moreover, it is a smooth function of $ E $, since the resolvent is known to be an analytic function in the resolvent set. Our smooth-family argument  will also be based on $ \ell^\infty $-bounds on the $ k $th derivative of the above vector:
\begin{equation}
 | \varphi^{(k)}(E) \rangle := (-1)^k  \sqrt{2^N}  A^*  R^>(E)^{k+1}  \mathbbm{1}^>  | \fs \rangle \in  \ell^2(\mathcal{L}_\eta)  , \quad k \in \mathbb{N} .
\end{equation}
The key $ \ell^\infty $-estimates are the following. 
\begin{lemma}\label{lem:phiest}
For  $ \lambda \delta  > \Gamma \geq 0 $ and assuming~\eqref{eq:eta}, there is some $ c \equiv c(\lambda\delta,\eta,\Gamma) >0 $ and an event of probability at least $ 1 - e^{-c N } $ such that on this event and for all $ N $, all $ E \geq \delta \lambda N $ and all $ \pmb{\sigma}\in \mathcal{L}_\eta $:
\begin{equation}\label{eq:phiLinfty}
	 \left| \langle \pmb{\sigma} | \varphi(E) \rangle \right|  \leq 1 + \frac{2\Gamma}{\lambda\delta -\Gamma} \ \frac{\delta}{\varepsilon}   ,
\end{equation}
and for any $ k \in \mathbb{N} $:
\begin{equation}\label{eq:phiLinftyk}
 \left| \langle \pmb{\sigma} | \varphi^{(k)}(E) \rangle \right|  \leq  \frac{\Gamma}{N^k} \ \left(\frac{2}{\lambda\delta -\Gamma} \ \frac{\delta}{\varepsilon}\right)^{k+1} . 
\end{equation}
\end{lemma}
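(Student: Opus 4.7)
The plan is to reduce both $\ell^\infty$-bounds to the $\ell^1$-resolvent bound of Subsection~\ref{sec:L1}, applied to a suitably modified Hamiltonian on the full Hilbert space. Starting from the definitions of $|\varphi(E)\rangle$ and $|\varphi^{(k)}(E)\rangle$, and using $A^* = \mathbbm{1}^<\,\Gamma T\,\mathbbm{1}^>$ together with $\sqrt{2^N}\langle\pmb{\sigma}|\mathbbm{1}^<|\fs\rangle = 1$ for $\pmb{\sigma}\in \mathcal{L}_\eta$, one obtains
\begin{equation*}
\langle \pmb{\sigma}|\varphi(E)\rangle = 1 + \Gamma\!\!\sum_{\substack{d(\pmb{\sigma},\pmb{\tau})=1\\ \pmb{\tau}\notin \mathcal{L}_\eta}}\!\! g_1(\pmb{\tau}),\qquad \langle \pmb{\sigma}|\varphi^{(k)}(E)\rangle = (-1)^k\Gamma\!\!\sum_{\substack{d(\pmb{\sigma},\pmb{\tau})=1\\ \pmb{\tau}\notin \mathcal{L}_\eta}}\!\! g_{k+1}(\pmb{\tau}),
\end{equation*}
with the non-negative weights $g_j(\pmb{\tau}) := \sqrt{2^N}\,\langle \pmb{\tau}|R^{>}(E)^{j}\,\mathbbm{1}^{>}|\fs\rangle \geq 0$. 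Since at most $N$ neighbors contribute to the sum, it is enough to prove the uniform bound $g_{k+1}(\pmb{\tau})\leq \bigl(\tfrac{2\delta/\varepsilon}{N(\lambda\delta-\Gamma)}\bigr)^{k+1}$ on $\mathcal{Q}_N\setminus\mathcal{L}_\eta$.

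Because $\max\mathrm{spec}(H)\approx \lambda\beta_c N$ may exceed $E\geq \delta\lambda N$ when $\delta<\beta_c$, the Proposition of Subsection~\ref{sec:L1} cannot be applied to $H$ directly. I would therefore introduce the auxiliary Hamiltonian $\hat H := \Gamma T - \lambda \hat U$ on $\ell^2(\mathcal{Q}_N)$ with the truncated potential $\hat U := U$ on $\mathcal{Q}_N\setminus\mathcal{L}_\eta$ and $\hat U := 0$ on $\mathcal{L}_\eta$. The Feynman--Kac representations of $e^{t\hat H}$ and $e^{tH^{>}}$ reveal that paths transiting through $\mathcal{L}_\eta$ contribute non-negatively to $\langle\pmb{\tau}|e^{t\hat H}|\pmb{\tau}'\rangle$ but are excluded from $\langle\pmb{\tau}|e^{tH^{>}}|\pmb{\tau}'\rangle$; hence $\hat R(E)\geq R^{>}(E)$ entrywise on $\mathcal{Q}_N\setminus\mathcal{L}_\eta$, provided $E>\max\mathrm{spec}(\hat H)$, and by positivity the same ordering lifts to all powers $\hat R(E)^{k+1}\geq R^{>}(E)^{k+1}$. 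In particular, $g_{k+1}(\pmb{\tau})\leq \sqrt{2^N}\langle \pmb{\tau}|\hat R(E)^{k+1}|\fs\rangle$ after replacing $\mathbbm{1}^>|\fs\rangle$ by $|\fs\rangle$ via positivity.

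By construction $\hat U\geq -\eta N$ globally, so the Proposition of Subsection~\ref{sec:L1} applies to $\hat H$, and its parameter $\hat\gamma_N(E)$ involves only sums over $\pmb{\tau}\notin\mathcal{L}_\eta$, coinciding with the truncated $U_-$ used in the preceding probabilistic lemma. Applied at $E=\delta\lambda N$ (the integrand being monotone decreasing in $E$), that lemma gives $\hat\gamma_N(E)\leq \varepsilon'$ uniformly in $E\geq\delta\lambda N$ with probability $\geq 1-e^{-cN}$, for any preselected $\varepsilon'>0$. Together with the bound $\sup_{\pmb{\sigma}} E/(E+\lambda\hat U(\pmb{\sigma}))\leq \delta/\varepsilon$, obtained separately on $\mathcal{Q}_N\setminus\mathcal{L}_\eta$ (where $\hat U\geq -\eta N$) and on $\mathcal{L}_\eta$ (where $\hat U=0$), the Proposition then yields, after absorbing $\hat\gamma_N$ into a factor of $2$,
\begin{equation*}
L_1(\hat H) := \sup_{\pmb{\sigma}}\sqrt{2^N}\langle \fs|\hat R(E)|\pmb{\sigma}\rangle \leq \frac{2\delta/\varepsilon}{N(\lambda\delta-\Gamma)}.
\end{equation*}
By the self-adjoint $\ell^1$--$\ell^\infty$ duality and submultiplicativity of the operator norm, $\sqrt{2^N}|\langle\pmb{\tau}|\hat R(E)^{k+1}|\fs\rangle|\leq L_1(\hat H)^{k+1}$. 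Multiplying by the $\Gamma N$ prefactor from the neighbor sum produces \eqref{eq:phiLinftyk}, and \eqref{eq:phiLinfty} follows by the same computation after separating the constant $1$ from $\mathbbm{1}^<$.

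The hard part is verifying the spectral condition $E>\max\mathrm{spec}(\hat H)$ uniformly on $\{E\geq\delta\lambda N\}$, which is required for both the Feynman--Kac positivity of $\hat R(E)$ and the applicability of the Proposition. Lemma~\ref{lem:normH>} controls $\max\mathrm{spec}(H^{>})\leq \lambda(\delta-\varepsilon/2)N$ with exponentially high probability; on the $\mathcal{L}_\eta$ block one has $\hat H|_{\mathcal{L}_\eta}=\Gamma T|_{\mathcal{L}_\eta}$ of norm $\leq \Gamma N<\lambda\delta N$ (since $\lambda\delta>\Gamma$); and the off-diagonal coupling has norm $\leq \Gamma\sqrt{N\sup_{\pmb{\tau}}d_\eta(\pmb{\tau})}$, which is $o(N)$ with overwhelming probability by a Chernoff bound on the number $d_\eta(\pmb{\tau})$ of $\mathcal{L}_\eta$-neighbors of $\pmb{\tau}$, essentially a reprise of the preceding probabilistic lemma. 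A Weyl--Schur complement perturbation argument then yields $\max\mathrm{spec}(\hat H)\leq \lambda(\delta-\varepsilon/2)N + o(N)<\delta\lambda N$ for $N$ large, closing the argument on a single event of probability $\geq 1-e^{-cN}$.
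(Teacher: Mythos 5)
Your proposal is correct in spirit but takes a genuinely different route at the crucial step of feeding the restricted resolvent $R^>(E)$ into the deterministic $\ell^1$-estimate of Subsection~\ref{sec:L1}. The paper sends the potential to $+\infty$ on $\mathcal{L}_\eta$: replacing $U$ there by a large constant $\alpha$ and letting $\alpha\to\infty$ \emph{monotonically} produces $R^>(E)$ as a limit of genuine full-space resolvents, and -- since this modification only pushes the spectrum down -- the condition~\eqref{eq:resset} is inherited directly from Lemma~\ref{lem:normH>} with no extra work. You instead set the potential to $0$ on $\mathcal{L}_\eta$, use Feynman--Kac positivity to dominate $R^>(E)$ entrywise by the resolvent $\hat R(E)$ of the modified operator, and then have to re-establish $E>\max\mathrm{spec}(\hat H)$ by hand via a block/Weyl perturbation argument. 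That argument is plausible (the diagonal blocks are controlled by Lemma~\ref{lem:normH>} and by $\|\Gamma T^{<<}\|\leq\Gamma N<\lambda\delta N$, while the off-diagonal coupling is $O(\sqrt{N\sup d_\eta})=o(N)$ with exponentially high probability), but it is genuinely extra work that the hard-wall limit avoids, and you only sketch it. The domination of $R^>$ by a full-space resolvent is a nice, concrete alternative to the paper's $\alpha\to\infty$ argument and makes the role of positivity more transparent. The remainder of your argument -- extracting a uniform bound on $L_1(\hat R(E))$ from the Proposition and the probabilistic lemma, and then iterating via submultiplicativity of the $\ell^1$ (equivalently $\ell^\infty$) operator norm to get the $k$-th derivative bound -- is a reformulation of the paper's Hölder iteration, and the constants match. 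If you want to complete the proof along your lines, the one step you should flesh out is the spectral bound on $\hat H$: in particular, make precise the Chernoff estimate for $\sup_{\pmb{\tau}} d_\eta(\pmb{\tau})$ (a reprise of the paper's Lemma on $\gamma_N$), and check that the resulting exceptional event can be merged with the others into a single event of probability $\geq 1 - e^{-cN}$ uniform in $E\geq\lambda\delta N$.
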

\begin{proof}
Throughout the proof, we restrict to the event on which~\eqref{eq:normH>} holds, which by Lemma~\ref{lem:normH>} has a probability that is exponentially close to one.
In order to be able to use the results of the previous Subsection~\ref{sec:L1}, we further note that the restricted resolvent can be viewed as the limit $ \alpha \to \infty $ of the resolvent $ (E+\lambda U_\alpha - \Gamma T)^{-1} $ on $ \ell^2(\mathcal{Q}_N) $, in which the value of $ U_\alpha: \mathcal{Q}_N \to \mathbb{R} $ on $ \mathcal{L}_\eta $ is set to $ \alpha > 0 $. 
Therefore, the estimate~\eqref{eq:L1est} carries over to the restricted resolvent. Using the restriction $ E \geq \lambda \delta N > \Gamma N $  together with~\eqref{eq:eta}, this implies that for all $   \pmb{\sigma} \in \mathcal{L}_\eta $:
\begin{equation}
0 \leq \sqrt{2^N} \ \langle \fs | \mathbbm{1}^> R^>(E) |  \pmb{\sigma} \rangle \leq  \frac{1}{\lambda\delta N -\Gamma N (1+\gamma_N(E) )} \ \frac{\delta}{\varepsilon}   
\end{equation}
with 
$$ \gamma_N(E) =  \sup_{\pmb{\sigma} }  \frac{1}{ N } \sum_{d(\pmb{\tau}, \pmb{\sigma} ) = 1}  \frac{U_-(\pmb{\tau}) } { E- U_-(\pmb{\tau})  } \leq \frac{\lambda \delta - \Gamma }{2\Gamma} . 
$$
The last inequality holds on the event in~\eqref{eq:REMprob} (with $ \varepsilon = \frac{\lambda \delta - \Gamma }{2\Gamma}  $ there), which has the desired probability. 
Consequently, on this event and for all $   \pmb{\sigma} \in \mathcal{L}_\eta $ and all $ N $:
$$
  \left| \langle \pmb{\sigma} | \varphi(E) \rangle \right|  \leq 1 + \Gamma\sum_{ \substack{\pmb{\tau}  \in \mathcal{Q}_N\backslash \mathcal{L}_\eta \\ d(\pmb{\sigma},\pmb{\tau}) = 1} } \sqrt{2^N} \ \langle \fs | \mathbbm{1}^> R^>(E) |  \pmb{\tau} \rangle \leq 1 + \frac{2\Gamma}{\lambda\delta -\Gamma} \ \frac{\delta}{\varepsilon}   . 
$$
This completes the proof of~\eqref{eq:phiLinfty}. For the proof of~\eqref{eq:phiLinftyk}, we iterate the above estimate 
\begin{align}
&  \left| \langle \pmb{\sigma} | \varphi^{(k)}(E) \rangle \right|   \leq \Gamma \sum_{ \substack{\pmb{\tau} \in \mathcal{Q}_N\backslash \mathcal{L}_\eta \\ d(\pmb{\sigma},\pmb{\tau}) = 1} } \sqrt{2^N} \ \langle \fs | \mathbbm{1}^> R^>(E)^{k+1} |  \pmb{\tau} \rangle \notag \\
 & \leq N\Gamma \mkern-15mu \sup_{ \pmb{\tau}_1, \dots \pmb{\tau}_{k+1} \in  \mathcal{Q}_N\backslash \mathcal{L}_\eta} \prod_{j=1}^{k+1} \ \sqrt{2^N} \langle \fs | \mathbbm{1}^> R^>(E) |  \pmb{\tau}_j \rangle \leq \Gamma N^{-k} \left(\frac{2}{\lambda\delta -\Gamma} \ \frac{\delta}{\varepsilon}\right)^{k+1} .
\end{align}
The second estimate results from representing the $ k+1 $-fold product as a $ k $-fold summation, and estimating those sums over the non-negative matrix elements of the resolvent using a trivial $ p = 1 $ and $ q = \infty $ H\"older bound. 
\end{proof}
\subsection{Proof of Proposition~\ref{lem:qlargedev}}\label{subsec:PP}

\begin{proof}[Proof of Proposition~\ref{lem:qlargedev}]
We  pick $ \varepsilon > 0 $  small satisfying~\eqref{eq:eta}, but otherwise arbitrary, and set $ \eta = \delta - \varepsilon $. 
		The proof starts from the the expansion~\eqref{eq:efexp}  in terms of the corresponding normalized eigenfunctions $ \psi(E) $ with energies $ E > \delta \lambda N  > \Gamma N $. Using~\eqref{eq:defvarphi} we rewrite the scalar product  in terms of the vectors $   \psi^<(E),  \varphi(E) \in \ell^2(\mathcal{L}_\eta ) $ on the subspace corresponding to extreme deviations:	
		$$ \langle \fs | Q_{\delta \lambda } | \fs \rangle = \frac{1}{2^N} \sum_{E \geq \delta \lambda N }  \left| \langle \varphi(E) |  \psi^<(E) \rangle \right|^2 . 
	$$
	Note that there is an event with probability exponentially close to unity, on which the summation in~\eqref{eq:efexp} can be restricted to the bounded interval 
	$ I_{\delta,\lambda}  := (\delta \lambda N , 2\beta_c N ) $. We decompose this interval into $ N_{\varepsilon} \in \mathbb{N} $ subintervals $ I_\varepsilon(n) $ of length bounded by $ e^{-\varepsilon N } $:
\begin{equation}\label{eq:intervsum}
		\langle \fs | Q_{\delta \lambda } | \fs \rangle  = \frac{1}{2^N} \sum_{n=1}^{N_\varepsilon} \sum_{E \in I_\varepsilon(n)} \left| \langle \varphi(E)  |  \psi^<(E)\rangle\right|^2 .
\end{equation}
For each interval $  I_\varepsilon(n) $, we pick the center $ E_n $ of this interval. By Taylor's theorem, for any $ E \in I_\varepsilon(n) $ and $ k \in \mathbb{N} $, there is $ \widehat E_n^{(k)} \in I_\varepsilon(n) $ such that 
\begin{align}
	\langle \varphi(E)  |  \psi^<(E)\rangle = & \sum_{j=0}^{k-1} \langle \varphi^{(j)}(E_n)  |  \psi^<(E)\rangle \frac{(E-E_n)^j}{j!} + r_n^{(k)}(E) \\
	& \mbox{with} \quad r_n^{(k)}(E) := \langle \varphi^{(k)}(\widehat E_n^{(k)})  |  \psi^<(E)\rangle \frac{(E-E_n)^k}{k!} . \notag 
\end{align}
We now further restrict attention to the event in Lemma~\ref{lem:phiest}, which still has a probability exponentially close to one. On this event, by~\eqref{eq:phiLinftyk} and the Cauchy-Schwarz estimate 
\begin{align}
\left| \langle \varphi^{(k)}( E')  |  \psi^<(E)\rangle  \right| & \leq \big\|   \varphi^{(k)}(\widehat E) \big\|_2 \big\| \psi^<(E) \|_2 \notag \\
&  \leq  \sup_{\pmb{\sigma} \in \mathcal{L}_\eta} \sup_{E' > \lambda \delta N} \left| \langle \pmb{\sigma} |  \varphi^{(k)}(E') \rangle \right| \sqrt{ \left| \mathcal{L}_\eta \right| }  , 
\end{align}
valid for any  for any $ E, E' > \lambda \delta N $, we obtain the bound 
\begin{equation}
\sup_{E \in I_\varepsilon(n)} \left| r_n^{(k)}(E)  \right| \leq c \ \frac{C^{k+1} e^{-\varepsilon k N}}{2^k N^k k!} \sqrt{ \left| \mathcal{L}_\eta \right| } , \quad \mbox{with}\quad  C := \frac{2}{\lambda\delta -\Gamma} \ \frac{\delta}{\varepsilon} , \; c := \Gamma + C^{-1} .
\end{equation}
The choice $ k = k_N:= e^{\varepsilon N} $ ensures that the last term is bounded by $ \sqrt{2^{-N}  \left| \mathcal{L}_\eta \right|} $ for all $ N $ large enough. 

The sum in~\eqref{eq:intervsum} over the eigenvalues for a fixed interval may be split into two parts:
\begin{align}\label{eq:twosums}
	 \sum_{E \in I_\varepsilon(n)} \left| \langle \varphi(E)  |  \psi^<(E)\rangle\right|^2 \leq & \ 2 k_N \sum_{j=0}^{k_N-1}  \frac{e^{-\varepsilon jN}}{j!}  \sum_{E \in I_\varepsilon(n)} \left|  \langle \varphi^{(j)}(E_n)  |  \psi^<(E)\rangle \right|^2 + 2  \sum_{E \in I_\varepsilon(n)}  \left| r_n^{(k_N)}(E)  \right|^2 \notag \\
	 =: & \ S^{(1)}_N(n) + S^{(2)}_N(n) . 
\end{align}
Inserting the above estimate into the second contribution leads to the bound
\begin{equation}\label{eq:2nd}
\frac{1}{2^N} \sum_{n=1}^{N_\varepsilon} S^{(2)}_N(n) \leq \frac{2}{2^{N}} \sum_{n=1}^{N_\varepsilon} \sum_{E \in I_\varepsilon(n)} 2^{-N}  \left| \mathcal{L}_\eta \right| \leq 2^{1-N}  \left| \mathcal{L}_\eta \right| . 
\end{equation}
The first sum in~\eqref{eq:twosums} is further estimated using the fact that $ | \psi^<(E) \rangle = \mathbbm{1}^<  | \psi(E) \rangle $ and the orthonormality of the eigenbasis $ \psi(E) $, which implies that for all $ j \in \mathbb{N} $:
$$
 \sum_{E \in I_\varepsilon(n)} \left|  \langle \varphi^{(j)}(E_n)  |  \psi^<(E)\rangle \right|^2  \leq \big\|  \varphi^{(j)}(E_n) \big\|_2^2 \leq  \left| \mathcal{L}_\eta \right|  \sup_{E \geq \lambda\delta N} \sup_{\pmb{\sigma} } \left| \langle \pmb{\sigma} | \varphi^{(j)}(E) \rangle \right|^2 \leq   \left| \mathcal{L}_\eta ´\right|   \Gamma^2 \frac{C^{2j+2}}{N^{2j}} . 
$$
Consequently, the contribution of the first sum to the total sum in~\eqref{eq:twosums} is bounded by
\begin{equation}\label{eq:1st}
\frac{1}{2^N} \sum_{n=1}^{N_\varepsilon} S^{(1)}_N(n)  \leq 2 k_N \Gamma^2C^2   \exp\left( \frac{C^2}{N^2} e^{-\varepsilon N} \right) \frac{\left| \mathcal{L}_\eta \right|}{2^N}   \leq 2 \Gamma^2 C^2 e^{C^2} e^{\varepsilon N}  \ \frac{\left| \mathcal{L}_\eta \right|}{2^N} . 
\end{equation}
  In view of the  classical large deviation result in \eqref{eq:clLD},
  our bounds on the second and first contributions, \eqref{eq:2nd} and~\eqref{eq:1st} respectively, imply that on an event of probability exponentially close to one:  
$$
\limsup_{N\to \infty} \frac{1}{N} \ln \langle \fs | Q_{\delta \lambda } | \fs \rangle  \leq \varepsilon - 
\begin{cases} \frac{(\delta-\varepsilon)^2}{2} , & \delta -\varepsilon < \beta_c, \\
		\infty , & \delta -\varepsilon \geq \beta_c .
		\end{cases}
$$
The bound~\eqref{eq:qlargedev} then follows from a Borel-Cantelli argument and the fact that $ \varepsilon > 0 $ can be chosen arbitrarily small. 
\end{proof}

\appendix 

\section{Supplementary Results}

This appendix is mainly concerned with the supplementary results concerning the trace of the QREM's spectral projector in \eqref{eq:traceLD}  and the sharpness of the estimate in Proposition~\ref{lem:qlargedev}.
For the proof of $\eqref{eq:traceLD}$, we recall that the eigenvalues $E > \Gamma N$ of the QREM Hamiltonian are given by small shifts of the classical low-energy  levels. The following lemma gives a crude estimate, which is enough for our purposes in case $\delta < \beta_c$.

\begin{lemma}\label{lem:shift} Let $E_1 \geq E_2 \geq \cdots \geq E_{2^N} $ be the order eigenvalues of the QREM Hamiltonian $H=\Gamma T -\lambda U $  and label the classical configurations $\pmb{\sigma}$ in the opposite order, i.e. $U(\pmb{\sigma}_1) \leq U(\pmb{\sigma}_2) \leq \cdots \leq U(\pmb{\sigma}_{2^{N}})$. Suppose that $\delta \lambda > \Gamma$. Then, there are some constants $K = K(\lambda \delta, \Gamma) \in \mathbb{N}$ and $c = c(\lambda \delta, \Gamma) > 0$ such that 
\begin{equation}\label{eq:shift}
   \mathbb{P}\left(\sup_{j: E_j > \lambda \delta N} |E_j + \lambda U(\pmb{\sigma}_j)| > K \sqrt{N}\right) \leq e^{-cN}
\end{equation}
\end{lemma}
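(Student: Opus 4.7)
The plan is to decompose the Hilbert space as $\mathcal{H} = \ell^2(\mathcal{L}_\eta) \oplus \ell^2(\mathcal{Q}_N \setminus \mathcal{L}_\eta)$ with $\eta := \delta - \varepsilon$ for some fixed $\varepsilon \in (0, \delta - \Gamma/\lambda)$, and to compare $H = \widetilde H + V$ with its block-diagonal part $\widetilde H := (\mathbbm{1}^< H \mathbbm{1}^<) \oplus H^>$, applying Weyl's inequality twice to match eigenvalues of $H$ with classical levels up to a $\sqrt{N}$-shift. Here $V := \Gamma(\mathbbm{1}^< T \mathbbm{1}^> + \mathbbm{1}^> T \mathbbm{1}^<)$ collects the off-diagonal blocks of $H$.

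Two high-probability structural facts drive the estimate. From Lemma~\ref{lem:normH>}, $\max \mathrm{spec}(H^>) \leq \lambda(\delta - \varepsilon/2)N$ on an event of probability $\geq 1 - e^{-cN}$. Second, I claim that the induced subgraph of $\mathcal{L}_\eta$ in the Hamming cube has maximum degree bounded by some constant $D = D(\eta)$: for a fixed $\sigma \in \mathcal{Q}_N$, the number of neighbors with $U(\tau) < -\eta N$ is a sum of $N$ i.i.d.\ Bernoullis with parameter $\leq e^{-\eta^2 N/2}$, so
\begin{equation*}
\mathbb{P}\left( \max_{\sigma \in \mathcal{Q}_N} |\{\tau \in \mathcal{L}_\eta : d(\sigma,\tau)=1\}| \geq k \right) \leq 2^N \binom{N}{k} e^{-k\eta^2 N/2},
\end{equation*}
which decays exponentially in $N$ as soon as $k > 2\ln 2/\eta^2$.

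On the intersection of these events, Schur's test applied to the bipartite operator $\mathbbm{1}^< T \mathbbm{1}^>$—whose row sums are bounded by $N$ and column sums by $D$—produces the key estimate $\|V\| \leq \Gamma\sqrt{N D}$. Weyl's inequality then gives $|E_j - \mu_j(\widetilde H)| \leq \|V\|$, writing $\mu_j$ for the $j$-th largest eigenvalue. For any $j$ with $E_j > \lambda\delta N$, one has $\mu_j(\widetilde H) > \lambda(\delta - \varepsilon/2)N \geq \max \mathrm{spec}(H^>)$ for all $N$ large enough; hence $\mu_1(\widetilde H), \dots, \mu_j(\widetilde H)$ all lie strictly above the spectrum of $H^>$ and must therefore all be eigenvalues of $\mathbbm{1}^< H \mathbbm{1}^<$. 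Consequently $\mu_j(\widetilde H) = \mu_j(\mathbbm{1}^< H \mathbbm{1}^<)$, and $j \leq |\mathcal{L}_\eta|$, which ensures that $\sigma_j \in \mathcal{L}_\eta$.

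A second application of Weyl's inequality inside the block $\mathbbm{1}^< H \mathbbm{1}^< = -\lambda \mathbbm{1}^< U \mathbbm{1}^< + \Gamma\, \mathbbm{1}^< T \mathbbm{1}^<$, combined with $\|\mathbbm{1}^< T \mathbbm{1}^<\| \leq D$, yields $|\mu_j(\mathbbm{1}^< H \mathbbm{1}^<) + \lambda U(\sigma_j)| \leq \Gamma D$. Combining the two Weyl bounds produces the claim with a suitable $K = K(\lambda\delta, \Gamma)$. The main technical obstacle is the $O(\sqrt{N})$ scaling of $\|V\|$: the naive bound $\|T\| = N$ would only deliver a useless $O(N)$ shift, and the correct square-root scaling rests entirely on the constant-degree structural lemma for $\mathcal{L}_\eta$—a Chernoff-plus-union-bound argument, routine in isolation but indispensable for the whole approach.
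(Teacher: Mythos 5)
Your proof is correct. The paper itself gives no self-contained argument here but defers to the truncation method of \cite{Manai:2020ta}, whose decomposition isolates the deep holes of the REM at the level of the potential $U$ rather than of the Hilbert space; your route instead decomposes $\mathcal{H} = \ell^2(\mathcal{L}_\eta)\oplus\ell^2(\mathcal{Q}_N\setminus\mathcal{L}_\eta)$ and applies Weyl's inequality twice, once across the off-diagonal coupling $V$ and once inside the block $\mathbbm{1}^< H \mathbbm{1}^<$. Both arguments rest on the same combinatorial input, the sparsity of $\mathcal{L}_\eta$, and both deliver an $O(\sqrt{N})$ shift. You correctly identify the Schur-test estimate $\|V\| \leq \Gamma\sqrt{ND}$ as the crux: the row sums of $\mathbbm{1}^< T \mathbbm{1}^>$ are at most $N$, the column sums at most the constant $D$, and the geometric mean gives $\sqrt{N}$, whereas $\|T\|=N$ alone would give a useless $O(N)$ bound. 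It is worth noting that the constant-degree bound for the induced subgraph of $\mathcal{L}_\eta$ that you prove by Chernoff and union bound is strictly sharper than the estimate $N_\alpha(\pmb{\sigma}) \leq \delta N$ that the paper itself derives in Section~\ref{sec:L1}; the latter is deliberately weaker because it only needs to make $\gamma_N(E)$ small, while the present lemma genuinely requires the $O(1)$ degree, so your instinct to prove the stronger statement is right. Your pairing of indices is also correct: once $\mu_j(\widetilde H)$ exceeds $\max\mathrm{spec}(H^>)$, all of $\mu_1,\dots,\mu_j$ lie in the $\mathbbm{1}^< H\mathbbm{1}^<$ block, so $j\leq|\mathcal{L}_\eta|$ and the $j$-th smallest $U$-value in $\mathcal{L}_\eta$ coincides with the $j$-th smallest overall. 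A cosmetic point: the $K$ you produce depends on $\eta = \delta-\varepsilon$ and hence on $\delta$, $\lambda$, $\Gamma$ separately rather than only on $\lambda\delta$ and $\Gamma$ as the lemma's statement literally claims, but that looseness is present in the paper's phrasing (compare the constants in Lemma~\ref{lem:normH>}), not a gap in your argument.
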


Lemma~\ref{lem:shift} is a consequence of the truncation method - isolating the deep holes of REM -first introduced in \cite{Manai:2020ta}. To be more precise,  Lemma~2 and Lemma~3 and the decomposition~(15) in \cite{Manai:2020ta} imply Lemma~\ref{lem:shift}. The next lemma addresses the claimed trace of the QREM's spectral projector in \eqref{eq:traceLD}. 
\begin{lemma}\label{lem:TraceProj}
    Let $U$ be the REM  and $Q_{\delta \lambda}$ the spectral projector defined in \eqref{eq:proj}. Then, for any $\lambda \delta > \Gamma$, and almost all realizations of the REM
    \begin{equation}\label{eq:traceLDA}
	\lim_{N\to \infty} \frac{1}{N} \ln 2^{-N } \tr Q_{\delta \lambda }   =  - \begin{cases} \frac{\delta^2}{2} , & \delta < \beta_c, \\
		\infty , & \delta \geq \beta_c . 
		\end{cases}     
\end{equation}
\end{lemma}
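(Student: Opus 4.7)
The strategy is to sandwich $\tr Q_{\delta\lambda}$ between $|\mathcal{L}_{\delta \pm o(1)}|$ using Lemma~\ref{lem:shift}, and then invoke the classical REM large-deviation estimate~\eqref{eq:clLD} together with continuity of the rate function $\delta' \mapsto \delta'^2/2$ on $(0,\beta_c)$. Throughout I work on the intersection of the exponentially-high-probability events furnished by Lemma~\ref{lem:shift} (invoked at possibly several thresholds), the classical large-deviation event, and the event controlling $\min U$ via~\eqref{eq:REMmin}; almost sure convergence is then delivered by a routine Borel-Cantelli argument.

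For the upper bound, I apply Lemma~\ref{lem:shift} at the parameter $\delta$. On its high-probability event, any index $j$ with $E_j > \lambda\delta N$ satisfies $-\lambda U(\pmb{\sigma}_j) \geq E_j - K\sqrt{N} > \lambda\delta N - K\sqrt{N}$, so $\pmb{\sigma}_j \in \mathcal{L}_{\delta - K/(\lambda\sqrt{N})}$. Since $\pmb{\sigma}_1,\dots,\pmb{\sigma}_{j^*}$, with $j^* := \tr Q_{\delta\lambda}$, are the $j^*$ configurations with smallest REM values, they all lie in this set, giving $\tr Q_{\delta\lambda} \leq |\mathcal{L}_{\delta - K/(\lambda\sqrt{N})}|$. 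Combined with \eqref{eq:clLD} and continuity of the quadratic rate, this yields $\limsup \frac{1}{N}\ln(2^{-N}\tr Q_{\delta\lambda}) \leq -\delta^2/2$ for $\delta < \beta_c$. For $\delta > \beta_c$, \eqref{eq:REMmin} forces the set $\mathcal{L}_{\delta - K/(\lambda\sqrt{N})}$ to be empty for $N$ large, producing the stated $-\infty$.

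For the lower bound in the regime $\delta < \beta_c$, the delicate point is that Lemma~\ref{lem:shift} is one-sided: it constrains eigenvalues already above threshold but a priori says nothing about eigenvalues attached to configurations with extremely deep REM energies. To bypass this, I fix an auxiliary parameter $\delta_0$ with $\Gamma/\lambda < \delta_0 < \delta - \Gamma/\lambda$ (feasible when $\delta > 2\Gamma/\lambda$) and set $K_0 := K(\lambda\delta_0,\Gamma)$. For any $\pmb{\sigma}_j \in \mathcal{L}_{\delta + K_0/(\lambda\sqrt{N})}$, Weyl's inequality yields $E_j \geq -\lambda U(\pmb{\sigma}_j) - \Gamma N > (\lambda\delta - \Gamma)N + K_0\sqrt{N}$, which exceeds $\lambda\delta_0 N$ once $N$ is large. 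Hence Lemma~\ref{lem:shift} at the auxiliary threshold $\delta_0$ applies and sharpens this to $E_j \geq -\lambda U(\pmb{\sigma}_j) - K_0\sqrt{N} > \lambda\delta N$. Consequently $\tr Q_{\delta\lambda} \geq |\mathcal{L}_{\delta + K_0/(\lambda\sqrt{N})}|$, and \eqref{eq:clLD} supplies the matching rate $-\delta^2/2$.

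The main obstacle is the residual sliver $\Gamma/\lambda < \delta \leq 2\Gamma/\lambda$, where no admissible $\delta_0$ exists for the above two-step trick: Weyl's crude $\pm\Gamma N$ window is too wide to lift $E_j$ past $\lambda\delta_0 N$ for any valid $\delta_0$. One can either bootstrap the conclusion from the already-handled regime via monotonicity in the threshold, or appeal directly to the truncation decomposition~(15) in~\cite{Manai:2020ta} underlying Lemma~\ref{lem:shift}, which in fact yields a genuine two-sided spectral matching between $H$ and $-\lambda U$ restricted to the deep-hole subspace and thereby supplies both bounds throughout the whole range $\lambda\delta > \Gamma$.
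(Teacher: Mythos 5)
Your argument is broadly aligned with the paper's, which also sandwiches $\tr Q_{\delta\lambda}$ between cardinalities of classical deep-hole sets via Lemma~\ref{lem:shift} and then invokes~\eqref{eq:clLD}. Two substantive points deserve attention.

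First, you omit the boundary case $\delta = \beta_c$. Your upper bound via $\mathcal{L}_{\delta - K/(\lambda\sqrt{N})}$ only produces emptiness for $\delta > \beta_c$ strictly; for $\delta = \beta_c$ the threshold sits inside the $O(\sqrt{N})$-window around $\beta_c N$, while the minimum of $U$ lies within $O(\ln N)$ of $-\beta_c N$, so $\mathcal{L}_{\beta_c - K/(\lambda\sqrt{N})}$ is typically nonempty and Lemma~\ref{lem:shift} alone only yields $\limsup \leq -\ln 2$, not $-\infty$. The paper instead handles the entire range $\delta \geq \beta_c$ by invoking the sharp ground-state asymptotics $\max\mathrm{spec}(\Gamma T - \lambda U) = -\lambda\min U + \Gamma^2/(\lambda\beta_c) + o(1)$ from \cite[Thm.~1.5]{MaWa22} together with~\eqref{eq:REMmin}, which show that the top of the spectrum drops strictly below $\lambda\beta_c N$ for large $N$, so $\tr Q_{\delta\lambda}$ vanishes eventually. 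You need that finer input; the $K\sqrt{N}$ window of Lemma~\ref{lem:shift} is too coarse at $\delta = \beta_c$.

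Second, you correctly identify that the Weyl-plus-bootstrap route to the lower bound only closes when $\delta > 2\Gamma/\lambda$. However, your proposed fix via ``monotonicity in the threshold'' does not work: monotonicity of $\tr Q_{\delta\lambda}$ in $\delta$ only transfers the rate at some $\delta' > 2\Gamma/\lambda$, and since you cannot take $\delta'\downarrow\delta$ across $2\Gamma/\lambda$, you end up with $-\delta'^2/2 < -\delta^2/2$, which is strictly weaker than what the lemma asserts. Your second fallback --- going back to the truncation decomposition of~\cite{Manai:2020ta} which underlies Lemma~\ref{lem:shift} and provides a genuine two-sided spectral correspondence between high eigenvalues and deep holes --- is the right resolution and is in effect what the paper relies on: the paper's one-line claim that Lemma~\ref{lem:shift} gives both inclusions $|\mathcal{L}_{\delta+\varepsilon}| \leq \tr Q_{\delta\lambda} \leq |\mathcal{L}_{\delta-\varepsilon}|$ implicitly uses this matching, not merely the one-sided statement of the lemma. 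So your diagnosis is sound, but only one of your two proposed repairs actually works.
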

\begin{proof}
    We distinguish between the cases $\delta < \beta_c$ and $\delta \geq \beta_c$. Suppose first that $\delta < \beta_c$ and let $\varepsilon > 0$ be small enough such that $\Gamma < \lambda (\delta - \varepsilon)$ and $\delta + \varepsilon < \beta_c$. Except for an event of exponentially small probability, Lemma~\ref{lem:shift} implies for $N$ large enough
    $$ |\mathcal{L}_{\delta + \varepsilon}| \leq \tr Q_{\delta \lambda } \leq |\mathcal{L}_{\delta - \varepsilon}|. $$
    Due to the classical large deviation result \eqref{eq:clLD}, a Borel-Cantelli argument yields the almost sure bounds 
    $$ -\frac{(\delta + \varepsilon)^2}{2} \leq \liminf_{N\to \infty} \frac{1}{N} \ln 2^{-N } \tr Q_{\delta \lambda } \leq \limsup_{N\to \infty} \frac{1}{N} \ln 2^{-N } \tr Q_{\delta \lambda } \leq -\frac{(\delta - \varepsilon)^2}{2}. $$
    Since $\varepsilon > 0$ can be chosen arbitrarily small, the assertion follows for $\delta < \beta_c$.

It remains the case $\delta \geq \beta_c$. Here, we need  the ground state analysis from \cite[Theorem~1.5]{MaWa22} which guarantees that
\[  \max \mathrm{spec}(\Gamma T - \lambda U) = -\lambda \min_{\pmb{\sigma}} U(\pmb{\sigma}) + \frac{\Gamma^2}{\lambda \beta_c} +o(1), \]
holds almost surely if $\lambda \beta_c > \Gamma.$ Furthermore, we refer to the well-known asymptotics~\eqref{eq:REMmin} of the REM's minimum energy.
These two results imply that almost surely
\begin{equation}\label{eq:tracevanish}
\lim_{N \to \infty} \tr Q_{\delta \lambda} = 0, 
\end{equation}
if $\beta_c \lambda > \Gamma$ and $\delta \geq \beta_c $. This completes the proof.
    
\end{proof}

 \section*{Acknowledgments}
 SW thanks the DFG for support under grant EXC-2111 -- 390814868. 
 
 \noindent

\end{document}